\newtheorem{lem}{Lemma}
\newtheorem{thm}{Theorem}
\def\rank{\mathrm{rank}}
\DeclareMathAlphabet{\bit}{OML}{cmm}{b}{it}
\def\<{\leqslant}           
\def\>{\geqslant}           
\def\d{\partial}
\def\wh{\widehat}
\def\wt{\widetilde}
\def\Re{\mathrm{Re} }   
\def\mR{{\mathbb R}}
\def\mC{\mathbb{C}}
\def\col{\mathrm{vec}}
\def\Tr{\mathrm{Tr}}
\def\rT{\mathrm{T}}
\def\bS{\mathbf{S}}
\def\bE{\mathbf{E}}
\def\[[[{[\![\![}
\def\]]]{]\!]\!]}
\def\bra{\langle}
\def\ket{\rangle}
\def\re{\mathrm{e}}
\def\rd{\mathrm{d}}
\def\bA{\mathbf{A}}
\def\bJ{\mathbf{J}}
\def\x{\times}
\def\ox{\otimes}
\def\fF{\mathfrak{F}}
\def\fH{\mathfrak{H}}
\def\bT{{\bf T}}
\def\mS{{\mathbb S}}
\def\eps{\epsilon}
\def\ups{\upsilon}
\def\Ups{\Upsilon}
\def\diag{\mathop\mathrm{diag}}
\begin{document}
\title{\LARGE \bf 
Decoherence Time  Maximization and Partial Isolation for Open Quantum Harmonic Oscillator Memory Networks${}^*$}

\author{Igor G. Vladimirov$^{1}$, \quad Ian R. Petersen$^{2}$, \quad Guodong Shi$^{2,3}$%
\thanks{${}^*$This work is supported by the Australian Research Council grant DP240101494.}
\thanks{$^{1,2}$School of Engineering, Australian National University, Canberra, ACT,
Australia:
        {\tt\small igor.g.vladimirov@gmail.com,
        i.r.petersen@gmail.com}.}%
\thanks{$^{3}$Australian Centre for Robotics, University of Sydney,
Camperdown, Sydney, NSW,
Australia:
        {\tt\small guodong.shi@sydney.edu.au}.}%
}

\maketitle
\thispagestyle{empty}
\pagestyle{plain}

\begin{abstract}
This paper considers a network of open quantum harmonic oscillators which interact with their neighbours through direct energy and field-mediated couplings  and also with external quantum fields. The
position-momentum dynamic variables of the network are governed by  linear quantum stochastic differential equations associated with the nodes of a graph whose edges specify the interconnection of the component oscillators. Such systems can be employed as Heisenberg picture quantum memories  with an engineered ability to approximately retain initial conditions over a
bounded  time interval. We use the
quantum  memory decoherence time defined  previously in terms of a fidelity threshold on a weighted mean-square deviation for a subset (or linear combinations)    of network variables from their initial values. This approach is applied to maximizing a high-fidelity asymptotic approximation of the  decoherence time over the direct energy coupling parameters of the network. The resulting optimality condition is a set of linear equations for blocks of a sparse matrix associated with the edges of the direct energy coupling graph of the network. We also discuss a setting where the quantum network has
a subset of dynamic variables which are affected by the external fields only indirectly,  through a complementary ``shielding'' system. This holds under a rank condition on the network-field coupling matrix and can be achieved through an appropriate field-mediated coupling between the component oscillators.
The partially isolated subnetwork has
a longer decoherence time in the high-fidelity limit, thus providing
a particularly relevant candidate for a quantum memory.
\end{abstract}

\section{INTRODUCTION}

Due to the operator-valued nature of quantum mechanics, which  describes  the dynamics and statistical properties of the physical world on atomic and subatomic scales \cite{LL_1981,S_1994},  the quantum variables are of the power of the continuum even in the finite-level case. For example, the qubit, which is the simplest quantum mechanical system,   involves the Hilbert space $\mC^2$ as opposed to the two-point state space of classical binary systems (such   as flip-flop circuits in digital computer electronics) or Bernoulli random variables. In line with the qualitatively different noncommutative structure of quantum variables and quantum states, their dynamics and quantum probabilistic properties \cite{H_2001} follow the nonclassical Heisenberg or Schr\"{o}dinger pictures, including the postulate of  unitary evolution  for  isolated systems completely specified by the Hamiltonian.

The fragility of quantum dynamics with respect to external fields and measurements  complicates the manipulation of quantum systems
needed for the initialization and preservation of particular quantum states or dynamic variables over the course of time. The latter tasks correspond to the stages of ``writing'' and storing  the quantum information (followed by the ``reading'' phase) and can be implemented   using photonics and light-matter interaction (\cite{FCHJ_2016,HEHBANS_2016,WM_2008,WLZ_2019,YJ_2014} and references therein) or other platforms. These operations involve a controlled isolation of the quantum system from its environment, similarly to   the existing paradigms of quantum computation   \cite{NC_2000}, where   the closed unitary evolution alternates with measurements.

The reversibility of unitary dynamics  (especially in the zero Hamiltonian case where any quantum state and all the system variables are preserved in time) is particularly useful for quantum information storage. However, in contrast to this idealised scenario,
the ability to retain quantum states or dynamic variables is corrupted by the unavoidable interaction of the quantum system with its environment. The system-field interaction   gives rise to quantum noise \cite{GZ_2004} which forces the system to drift away from its initial condition in a dissipative (and hence, irreversible) fashion.
The resulting open quantum dynamics are modelled  by the Hudson-Parthasarathy quantum stochastic differential equations (QSDEs) \cite{HP_1984,P_1992} driven by quantum Wiener processes. They are linear for open quantum harmonic oscillators (OQHOs) \cite{JNP_2008,NY_2017,P_2017,ZD_2022} and quasi-linear for finite-level open quantum systems \cite{EMPUJ_2016,VP_2022_SIAM}. Although the finite-level systems are particularly suitable for modelling multi-qubit registers, continuous variables systems (including the OQHOs  with their position-momentum variables) are also used in the Gaussian channel models  of quantum information theory \cite{H_2019}.

For both classes of open quantum systems in the quantum stochastic calculus framework,   their performance as a Heisenberg picture quantum memory in the storage phase has recently been described in terms of a decoherence time \cite{VP_2024_ANZCC,VP_2024_EJC}. The latter is defined as the  time horizon at which a weighted mean-square deviation of the system variables from their initial values reaches a given  fidelity threshold (see \cite{VP_2023_SCL} and references therein for different yet related decoherence measures).
The maximization of the memory decoherence time (or its asymptotic approximation in the high-fidelity or short-horizon limit) over the energy and coupling parameters of the quantum system improves its ability to approximately retain  the initial conditions over a bounded time interval. The approximate decoherence time maximization for OQHOs has also been considered in \cite{VP_2024_arxiv} for a partially isolated subsystem which is affected by the external fields only indirectly through another subsystem of the oscillator, thus leading to a qualitatively longer memory decoherence time. However, these  optimization problems   were considered in  \cite{VP_2024_ANZCC,VP_2024_arxiv} with respect to  a single direct energy coupling matrix for a feedback interconnection of two OQHOs with direct energy and field-mediated couplings \cite{ZJ_2011a}. At the same time, the development of systematic methods for solving  such problems for  multicomponent quantum networks  with arbitrary interconnection architecture is of theoretical and practical interest.

In  the present paper, we extend the approach of \cite{VP_2024_ANZCC,VP_2024_arxiv} to a quantum memory system organised as a  network of OQHOs which interact with their neighbours through direct energy and field-mediated couplings  and also with external quantum fields. The
position-momentum dynamic variables of the network are governed by  a set of cross-coupled linear QSDEs associated with the nodes of a graph whose edges specify the interconnection of the constituent oscillators. This allows the network to be represented as an augmented OQHO by making advantage of the linearity of the QSDEs (instead of the quantum feedback network formalism \cite{GJ_2009} in its full generality).
We use the
quantum  memory decoherence time defined  previously in terms of a fidelity threshold on a weighted mean-square deviation for a subset (or linear combinations)    of network variables from their initial values. This approach is applied to maximizing the high-fidelity asymptotic approximation of the  decoherence time over the direct energy coupling matrices of the network. The resulting necessary and sufficient condition of optimality is a set of linear algebraic equations (resembling the Sylvester equations \cite{GLAM_1992,SIG_1998}) for blocks of a sparse matrix associated with the edges of the direct energy coupling graph of the network. We also discuss a setting where the quantum network has
a subset of dynamic variables which are affected by the external fields only indirectly,  through a complementary ``shielding'' system. This  partial subnetwork isolation from the external fields and the related network decomposition  hold under a certain rank condition on the network-field coupling matrix and can be achieved through an appropriate field-mediated coupling between the component oscillators.
The partially isolated subnetwork acquires
a longer decoherence time in the high-fidelity limit, thus providing
a particularly relevant candidate for a quantum memory.

The paper is organised as follows.
Section~\ref{sec:net} describes the internal dynamic and external field variables of the network of OQHOs being considered.
Section~\ref{sec:par} specifies the direct energy  and field-mediated couplings between the component OQHOs (including the neighbourhoods and Hamiltonians) and their coupling to the external fields.
Section~\ref{sec:dyn} provides the set of QSDEs for the Heisenberg dynamics of the quantum network and computes the  energy and system-field coupling  matrices for the augmented OQHO which represents it.
Section~\ref{sec:mem} quantifies the performance of the network as a temporary memory system in terms of weighted deviations of its dynamic variables from their initial conditions and the memory  decoherence  time.
Section~\ref{sec:tauhatmax} formulates and solves the problem of maximizing the approximate decoherence  time in the high-fidelity limit over the direct energy coupling parameters of the network.
Section~\ref{sec:dev} specifies the above results (including the approximate decoherence time maximization)  in application to the partial subnetwork isolation setting.
Section~\ref{sec:conc} provides concluding remarks.

\section{OPEN QUANTUM NETWORK AND EXTERNAL FIELDS}
\label{sec:net}

We consider a network of linear quantum stochastic systems  at the vertex set $V$ of  a finite graph specified below. For any $j\in V$, the $j$th component system is a multi-mode open quantum harmonic oscillator (OQHO) with an even number $n_j$ of internal dynamic variables which are time-varying self-adjoint operators of position-momentum type on (a dense domain of) a Hilbert space $\fH$. These $n_j$ quantum variables are assembled into a column-vector $X_j(t)$ (the time argument $t\> 0$ will often be omitted for brevity) and act initially (at $t=0$) on a Hilbert  space $\fH_j$.   It is assumed that they satisfy the canonical commutation relations (CCRs)
\begin{equation}
\label{Xcomm}
    [X_j(t), X_j(t)^\rT] = 2i\Theta_j,
    \qquad
    [X_j(t), X_k(t)^\rT] = 0
\end{equation}
for all $t\> 0$ and  $j,k\in V$ such that $j\ne k$,
where the transpose  $(\cdot)^\rT$
applies to matrices and vectors of operators as if the latter were scalars,
$i:=\sqrt{-1}$ is the imaginary unit,
and
\begin{equation}
\label{Thetaj}
  \Theta_j := \frac{1}{2}I_{n_j/2}\ox \bJ
\end{equation}
is a nonsingular real antisymmetric matrix of order $n_j$. Here,
$[\alpha, \beta^\rT] := ([\alpha_a, \beta_b])_{1\< a\< r, 1\< b\< s} = \alpha\beta^\rT - (\beta\alpha^\rT)^\rT $ is the matrix of commutators $[\alpha_a, \beta_b] = \alpha_a \beta_b-\beta_b\alpha_a$ for vectors $\alpha:= (\alpha_a)_{1\< a\< r}$,  $\beta:= (\beta_b)_{1\< b\< s}$ of linear operators. Also, $I_r$ is the identity matrix of order $r$,  and $\ox$ is the tensor product of spaces or operators,  including the Kronecker  product of matrices as in (\ref{Thetaj}), where
\begin{equation}
\label{bJ}
    \bJ :=
\begin{bmatrix}
  0 & 1\\
  -1 & 0
\end{bmatrix}
\end{equation}
spans the subspace of antisymmetric matrices of order $2$. In view of (\ref{Xcomm}), (\ref{Thetaj}), an augmented $n$-dimensional vector $X$ of  network variables,  given by
\begin{equation}
\label{X}
  X:= (X_j)_{j\in V},
  \qquad
  n:= \sum_{j \in V}n_j,
\end{equation}
has a block-diagonal CCR matrix:
\begin{equation}
\label{XXcomm}
    [X, X^\rT] = 2i\Theta,
    \qquad
    \Theta
    :=
    \diag_{j \in V}(\Theta_j)
    =
    \frac{1}{2}
    I_{n/2}\ox \bJ.
\end{equation}
The relation (\ref{XXcomm}) holds, for example, if the network variables are implemented as the pairs $q_1, p_1, \ldots, q_{n/2},p_{n/2}$ of quantum mechanical positions $q_r$ and momenta $p_r:= -i\d_{q_r}$, with $r = 1, \ldots, n/2$,  on the  Schwartz space \cite{V_2002} of rapidly decreasing functions on $\mR^{n/2}$.

In addition to the internal variables, the $j$th OQHO has multichannel  input and output bosonic fields $W_j$, $Y_{jk}$ (see Fig.~\ref{fig:net} for an informal illustration),
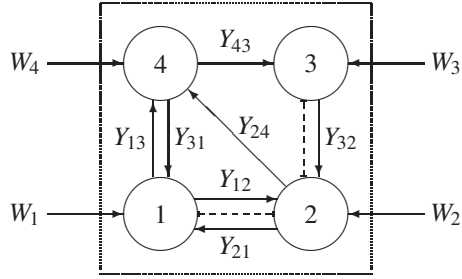
\begin{figure}[htbp]
{\centering
\unitlength=1mm
\linethickness{0.4pt}
\begin{picture}(60.00,35)
\put(12,-3){\dashbox{0.2}(36,36)}
\put(20,5){\circle{10}}
\put(40,5){\circle{10}}
\put(40,25){\circle{10}}
\put(20,25){\circle{10}}

\put(20,5){\makebox(0,0)[cc]{1}}
\put(40,5){\makebox(0,0)[cc]{2}}
\put(40,25){\makebox(0,0)[cc]{3}}
\put(20,25){\makebox(0,0)[cc]{4}}

\put(2,5){\makebox(0,0)[cc]{$W_1$}}
\put(58,5){\makebox(0,0)[cc]{$W_2$}}

\put(58,25){\makebox(0,0)[cc]{$W_3$}}
\put(2,25){\makebox(0,0)[cc]{$W_4$}}
\put(30,28){\makebox(0,0)[cc]{$Y_{43}$}}
\put(16,15){\makebox(0,0)[cc]{$Y_{13}$}}
\put(24,15){\makebox(0,0)[cc]{$Y_{31}$}}

\put(44,15){\makebox(0,0)[cc]{$Y_{32}$}}
\put(32.5,16.5){\makebox(0,0)[cc]{$Y_{24}$}}
\put(30,9.5){\makebox(0,0)[cc]{$Y_{12}$}}
\put(30,0.5){\makebox(0,0)[cc]{$Y_{21}$}}

\put(5,25){\vector(1,0){10}}
\put(25,25){\vector(1,0){10}}
\put(5,5){\vector(1,0){10}}
\put(55,5){\vector(-1,0){10}}

\put(55,25){\vector(-1,0){10}}

\put(19,10){\vector(0,1){10}}
\put(21,20){\vector(0,-1){10}}
\put(41,20){\vector(0,-1){10}}
\put(39,10){\dashbox{0.75}(0,10)}

\put(24.5,7){\vector(1,0){11}}
\put(25,5){\dashbox{0.75}(10,0)}
\put(35.5,3){\vector(-1,0){11}}
\put(36.6,8.7){\vector(-1,1){12.9}}
\end{picture}\vskip4mm}
\caption{An illustration of a network of OQHOs (depicted as numbered circles) with external quantum Wiener processes $W_j$ and output fields $Y_{jk}$ which mediate the coupling between the component OQHOs. This  field-mediated coupling is represented by arrows, while dashed lines indicate the direct energy coupling. }
\label{fig:net}
\end{figure}
which consist of $m_j$  and $r_{jk}$ time-varying  self-adjoint quantum variables, respectively (the dimensions  $m_j$, $r_{jk}$ are also even).
The input field $W_j$ is a quantum Wiener process on a symmetric Fock space $\fF_j$.  The network-field space   has the tensor-product structure
\begin{equation}
\label{fH}
    \fH
    :=
    \fH^0 \ox \fF,
    \qquad
    \fH^0
    :=
    \ox_{j\in V} \fH_j,
    \qquad
    \fF
    :=
    \ox_{j\in V} \fF_j,
\end{equation}
where the composite Fock space $\fF$ accommodates the external fields,  and the initial network space  $\fH^0$ provides a domain for the initial network variables.
Accordingly, the internal network variables and the output fields act on the network-field space $\fH$. Now, the input fields satisfy the two-point CCRs
\begin{equation}
\label{Wcomm}
  [W_j(s), W_j(t)^\rT]
  =
  2i
  \min(s,t)
  J_j,
  \qquad
  [W_j(s), W_k(t)^\rT]
  =0
\end{equation}
for all  $s,t\>0$ and $j, k\in V$ such that  $j\ne k$,
where
\begin{equation}
\label{Jj}
    J_j := I_{m_j/2}\ox \bJ
\end{equation}
is an orthogonal  real antisymmetric matrix of order $m$ defined in terms of (\ref{bJ}), so that $J_j^2 = -I_{m_j}$. The right-hand side of (\ref{Wcomm}) vanishes at $s=0$ or $t=0$ since the initial input field operators act as the identity operator on $\fF$, which commutes with any operator. Due to the continuous tensor-product     structure \cite{PS_1972} of the Fock space filtration,  the relation (\ref{Wcomm}) is equivalent to its fulfillment for all $s=t\>0$, whose incremental form is given by
\begin{equation}
\label{dWcomm}
    \rd [W_j, W_k^\rT]
     =
    [\rd W_j, W_k^\rT]
    +
    [W_j, \rd W_k^\rT]
    +
  [\rd W_j, \rd W_k^\rT]
  =
    [\rd W_j, \rd W_k^\rT]
    =
    \left\{
    \begin{matrix}
      2iJ_j \rd t  & {\rm if} & j=k\\
      0 & {\rm if} & j\ne k
    \end{matrix}
    \right..
\end{equation}
Here, the quantum Ito lemma \cite{HP_1984} is used along with the property of
the future-pointing Ito increments of the input quantum Wiener processes to commute with adapted processes (in the sense of the network-field space filtration).
In particular,
\begin{equation}
\label{dWWXYcomm}
    \big[
       \rd W_k(t),
       \begin{bmatrix}
          W_j(s)^\rT &
          X_j(s)^\rT &
          Y_j(s)^\rT
       \end{bmatrix}
    \big] = 0
\end{equation}
for all     $t\> s\> 0$ and $j,k\in V$.
We assume that the input fields are in the tensor-product vacuum state \cite{P_1992}  (in particular, they are statistically independent  for different  component OQHOs)
\begin{equation}
\label{ups}
  \ups:= \ox_{j \in V} \ups_j
\end{equation}
on the composite Fock space $\fF$, where $\ups_j$ is the vacuum state on the Fock space $\fF_j$. Accordingly, the CCRs (\ref{dWcomm}) correspond to the imaginary parts of the quantum Ito relations
\begin{equation}
\label{dWjdWk}
  \rd W_j \rd W_k^\rT =
    \left\{
    \begin{matrix}
      \Omega_j \rd t  & {\rm if} & j=k\\
      0 & {\rm if} & j\ne k
    \end{matrix}
    \right.,
    \qquad
    \Omega_j:= I_{m_j} + i J_j.
\end{equation}
In what follows, it will be convenient to use (similarly to (\ref{X}))  an augmented $m$-dimensional vector $W$ of external fields given by
\begin{equation}
\label{W}
  W:= (W_j)_{j\in V},
  \qquad
  m := \sum_{j \in V} m_j.
\end{equation}
From (\ref{Jj}), (\ref{dWjdWk}), the quantum Ito relations for $W$ take the form
\begin{equation}
\label{dWdW}
  \rd W \rd W^\rT
  = \Omega \rd t,
  \qquad
  \Omega = I_m + iJ,
\end{equation}
where
\begin{equation}
\label{J}
    J:= \diag_{j \in V}(J_j) = I_{m/2}\ox \bJ.
\end{equation}
The above described commutation and statistical structure of the external fields is essential for the dynamical and quantum probabilistic properties of the network. The latter also depend on the network-field interaction and the coupling between the component OQHOs considered in the next section.

\section{DIRECT ENERGY  AND FIELD-MEDIATED COUPLINGS}
\label{sec:par}

With every node $j \in V$ of the quantum network, we associate three subsets of the vertex set $V$ of the graph which do not contain $j$:
\begin{equation}
\label{nei}
    N_j^0, N_j^+, N_j^- \subset V\setminus \{j\},
    \qquad
    j \in V.
\end{equation}
Each of these subsets specifies a neighbourhood with which the $j$th OQHO interacts through a direct energy or field-mediated coupling.

More precisely, $N_j^0$ pertains to the direct energy coupling (for example, as between the OQHOs 1 and 2 and also between 2 and 3 in Fig.~\ref{fig:net}) and satisfies a symmetry property: for all $j,k \in V$ such that $j\ne k$,  the inclusion $k \in N_j^0$ holds if and only if so does $j \in N_k^0$. The set $N_j^0$ is therefore a neighbourhood of the node $j$ in an undirected graph (whose edges are represented by dashed lines in Fig.~\ref{fig:net}).

The set $N_j^+$  in (\ref{nei}) specifies a neighbourhood whose component OQHOs $k \in N_j^+$ ``receive'' the output fields $Y_{jk}$ from the $j$th  OQHO.
In a similar fashion, the set $N_j^-$  describes another neighbourhood whose component OQHOs $k \in N_j^-$ ``send'' their output fields $Y_{kj}$ to the $j$th  OQHO. The sets $N_j^{\pm}$ are specified by the outgoing and incoming edges of the node $j$ in a directed graph, respectively.
Accordingly, for all $j, k\in V$ such that $j\ne k$, the inclusions $k \in N_j^+$ and $j \in N_k^-$ are equivalent; see Fig.~\ref{fig:N+-}.
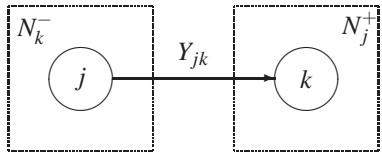
\begin{figure}[htbp]
{\centering
\unitlength=1.2mm
\linethickness{0.4pt}
\begin{picture}(60.00,13)
\put(20,5){\circle{7}}
\put(45,5){\circle{7}}
\put(12,-3){\dashbox{0.2}(16,16)}

\put(37,-3){\dashbox{0.2}(16,16)}
\put(20,5){\makebox(0,0)[cc]{$j$}}
\put(45,5){\makebox(0,0)[cc]{$k$}}

\put(15,10.5){\makebox(0,0)[cc]{$N_k^-$}}
\put(51,10.5){\makebox(0,0)[cc]{$N_j^+$}}
\put(32.5,7.5){\makebox(0,0)[cc]{$Y_{jk}$}}
\put(23.5,5){\vector(1,0){18}}
\end{picture}\vskip3mm}
\caption{Two component OQHOs $j$ and $k$ (with $j,k \in V$ such that $j\ne k$) belong to the neighbourhoods  $N_k^-$ and $N_j^+$ (shown as dashed boxes), respectively, if and only if
they are coupled through the output  field $Y_{jk}$ from $j$ to $k$. }
\label{fig:N+-}
\end{figure}
The output fields $Y_{jk}$ form  an $r_j$-dimensional  ``total'' output field $Y_j$ of the $j$th OQHO as
\begin{equation}
\label{Yj}
    Y_j
    :=
    (Y_{jk})_{k \in N_j^+},
    \qquad
            r_j
        :=
        \sum_{k \in N_j^+} r_{jk}
        \<
        m_j
\end{equation}
(the inequality is clarified below).

Ignoring the field-mediated coupling (to be considered later), the direct energy coupling is described in terms of the neighbourhoods $N_j^0$ and the following  Hamiltonian of the OQHO network:
\begin{equation}
\label{HN}
    H^0
    :=
    \frac{1}{2}
    X^\rT
    R^0
    X
    =
    \sum_{j \in V}
    \big(
    H_j + \sum_{k \in N_j^0}
    H_{jk}
    \big),
\end{equation}
where $R^0:= (R_{jk}^0)_{j,k\in V}\in \mS_n$ is the energy matrix of the network (where $\mS_n$ denotes the subspace of real symmetric matrices of order $n$), with $R_{jk}^0 :=0$ if $k \in V\setminus (\{j\} \bigcup N_j^0)$ , and use is also made of (\ref{X}). In (\ref{HN}),
\begin{equation}
\label{Hj}
    H_j := \frac{1}{2} X_j^\rT R_j X_j
\end{equation}
is the individual Hamiltonian
 of the $j$th OQHO  specified by the corresponding diagonal block  $R_j:= R_{jj}^0 \in \mS_{n_j}$ of $R^0$, while the matrix $R_{jk}^0 = R_{kj}^{0\rT} \in \mR^{n_j \x n_k}$ parameterises the direct energy coupling Hamiltonian for the $j$th and $k$th OQHOs:
 \begin{equation}
 \label{Hjk}
    H_{jk} :=
    \frac{1}{2}
    X_j^\rT R_{jk}^0X_k
    =
    H_{kj},
    \qquad
    j \in V,
    \quad
    k \in N_j^0,
\end{equation}
where the equality uses the commutativity between $X_j$ and $X_k$ from (\ref{Xcomm}) (for the same reason, $H_{jk}$ are also self-adjoint operators). Therefore, the direct energy coupling between the OQHOs $j \in V$ and $k \in N_j^0$ (indicated by an edge of the undirected graph mentioned above),  contributes $\frac{1}{2} (H_{jk} + H_{kj}) = H_{jk}$ to the total energy of the system.

In addition to the direct energy coupling between the component OQHOs through the symmetric interaction Hamiltonian terms (\ref{Hjk}),  the coupling of the $j$th OQHO to the external field $W_j$ is quantified by a matrix
$M_j\in \mR^{m_j \x n_j}$. Similarly, the coupling of the $j$th OQHO  to the output field $Y_{kj}$  (such as quantized electromagnetic radiation in the case of quantum optical \cite{WM_2008}  implementation of the network)  from the $k$th OQHO, with $k \in N_j^-$,  is specified by a matrix    $N_{jk}\in \mR^{r_{kj} \x n_j}$. In comparison with the direct energy coupling, the effect of the field-mediated coupling on the energetics of the network is less straightforward and will be recovered from the network dynamics discussed in the next section.

\section{HEISENBERG NETWORK DYNAMICS}
\label{sec:dyn}

In accordance with the direct energy and field-mediated coupling, the Heisenberg evolution of the network is governed by a set of linear quantum stochastic differential equations (QSDEs)
\begin{align}
\label{dXj}
    \rd X_j
    & =
    \big(A_j X_j + \overbrace{\sum_{k \in N_j^0} A_{jk} X_k}^{\rm direct\ energy\  coupling}\big) \rd t+
    B_j \rd W_j
    +
    \overbrace{
    \sum_{k \in N_j^-}
    E_{jk} \rd Y_{kj}}^{\rm field-mediated\ coupling},\\
\label{dYj}
    \rd Y_j
    & =
    C_j X_j \rd t  +  D_j \rd W_j,
    \qquad
    j \in V.
\end{align}
Here, the matrices
\begin{align*}
    A_j\in \mR^{n_j\x n_j},
    \quad
    A_{jk}\in \mR^{n_j\x n_k},
    \quad
    B_j\in \mR^{n_j\x m_j},
    \quad
    C_j\in \mR^{r_j\x n_j},
    \quad
    D_j\in \mR^{r_j\x m_j},
    \quad
    E_{jk}\in \mR^{n_j\x r_{kj}}
\end{align*}
are expressed in terms of the CCR matrices $\Theta_j$ from (\ref{Xcomm}), (\ref{Thetaj}), the individual energy matrices $R_j$ of the component OQHOs from  (\ref{Hj}), the direct energy coupling matrices $R_{jk}^0$ from (\ref{Hjk}), the matrices $M_j$ of coupling of the OQHOs to their external input fields and the field-mediated  coupling matrices $N_{jk}$ (see the last paragraph of the previous section) as
\begin{align}
\label{Aj}
    A_j
     & =
    2\Theta_j
    \big(
        R_j + M_j^{\rT}J_j M_j +
        \sum_{\ell \in N_j^-}
        N_{j\ell }^{\rT}
        \wt{J}_{\ell j}
        N_{j\ell }
    \big),\\
\label{AjkBj}
    A_{jk}
    & =
    2\Theta_j R_{jk}^0,
    \qquad\quad
        B_j
     = 2\Theta_j M_j^{\rT},\\
\label{CjEjk}
    C_j  & =2D_jJ_j M_j,
    \qquad
    E_{jk}  = 2\Theta_j N_{jk}^{\rT},
\end{align}
where
\begin{equation}
\label{Jkj}
    \wt{J}_{\ell j}
    :=
    D_{\ell j}J_\ell  D_{\ell j}^\rT.
\end{equation}
This parameterization (including the quadratic functions of the coupling matrices in (\ref{Aj})) reflects the physical realizability (PR) conditions  \cite{JNP_2008,SP_2012} for the coefficients of linear  QSDEs.
Furthermore, in accordance with (\ref{Yj}), (\ref{dYj}), the matrices
\begin{equation}
\label{CjDj}
    C_j
    := (C_{jk})_{k \in N_j^+},
    \qquad
    D_j := (D_{jk})_{k \in N_j^+}
\end{equation}
consist of blocks $C_{jk}\in \mR^{r_{jk}\x n_j}$ and $D_{jk}\in \mR^{r_{jk}\x m_j}$ with even numbers of rows $r_{jk}$, which,  similarly to the first equality in (\ref{CjEjk}), are related by
\begin{equation}
\label{Cjk}
  C_{jk}
  = 2D_{jk} J_j M_j.
\end{equation}
The blocks $D_{jk}$ of the matrix $D_j$ describe the selection of the output field channels from the $j$th OQHO and  their ``distribution'' over the receiving OQHOs $k \in  N_j^+$.
Accordingly, the matrix $D_j$ is formed from $r_j$
conjugate rows of a permutation matrix of order $m_j$ (which clarifies the inequality in (\ref{Yj})), so that, without loss of generality,  the quantum Ito matrix  of the output $Y_j$ of the $j$th OQHO in the relations
\begin{equation}
\label{dYjdYk}
  \rd Y_j \rd Y_k^\rT =
    \left\{
    \begin{matrix}
      D_j \Omega_jD_j^\rT \rd t  & {\rm if} & j=k\\
      0 & {\rm if} & j\ne k
    \end{matrix}
    \right.,
    \qquad
    j, k \in V
\end{equation}
has a form similar to that in (\ref{dWjdWk}):
\begin{equation}
\label{DOD}
    D_j \Omega_j D_j^\rT = I_{r_j} + iI_{r_j/2}\ox \bJ.
\end{equation}
In particular, it follows from (\ref{dWjdWk}), (\ref{Jkj}),  (\ref{CjDj}), (\ref{DOD}) that for all $j, k \in V$,
\begin{equation}
\label{DJD}
    D_{jk} J_j D_{j\ell}^\rT
    =
    \left\{
    \begin{matrix}
        \wt{J}_{jk} (= I_{r_{jk}/2}\ox \bJ)
        & {\rm if}\  k = \ell \in N_j^+ \\
      0 & {\rm otherwise}
    \end{matrix}
    \right..
\end{equation}
The block-diagonal structure of the matrix $D_j J_j D_j^\rT$ in (\ref{DOD}), (\ref{DJD}) means the commutativity between the Ito increments (at the same moment of time) of the output fields from the $j$th OQHO towards different component OQHOs in the network:
\begin{equation}
\label{dYjkdYklcomm}
    [\rd Y_{jk}, \rd Y_{j\ell}^\rT] = 0,
    \qquad
    k,\ell \in N_j^+,
    \quad
    k\ne \ell.
\end{equation}
In accordance with the partitioning in (\ref{Yj}), (\ref{CjDj}), the output $Y_{jk}$ of the $j$th OQHO towards the $k$th OQHO   satisfies the QSDE
\begin{equation}
\label{dYjk}
  \rd Y_{jk} = C_{jk} X_j \rd t + D_{jk} \rd W_j,
  \qquad
  j \in V,
  \quad
  k \in N_j^+.
\end{equation}
As the $k$th diagonal block of (\ref{dYjdYk}), the quantum Ito relations
$$
    \rd Y_{jk}\rd Y_{jk}^\rT = D_{jk} \Omega_j D_{jk}^\rT \rd t
$$
clarify the meaning of the matrix $\wt{J}_{jk}$ (cf. (\ref{Aj}), (\ref{Jkj}), (\ref{DJD})) as the CCR matrix for the output field $Y_{jk}$ in the sense that $[\rd Y_{jk}, \rd Y_{jk}^\rT] = 2i \wt{J}_{jk}\rd t$, which complements the commutativity (\ref{dYjkdYklcomm}).

The following theorem represents the dynamics (\ref{dXj}) of the quantum network  as that of a composite OQHO with the augmented vectors $X$, $W$  of network variables and input fields from (\ref{X}), (\ref{W}).

\begin{thm}
\label{thm:dyn}
The augmented vector $X$ of the network variables (\ref{X}) satisfies a linear QSDE
\begin{equation}
\label{dX}
  \rd X = A X \rd t + B \rd W
\end{equation}
driven by the augmented external field vector $W$ in (\ref{W}), where the matrices $A\in \mR^{n\x n}$ and $B \in  \mR^{n\x m}$ are given by
\begin{equation}
\label{AB}
    A := 2\Theta (R + M^\rT J M),
    \qquad
    B := 2\Theta M^\rT.
\end{equation}
Here, $\Theta$, $J$ are the augmented CCR matrices of the network and external field variables from (\ref{XXcomm}), (\ref{dWdW}), (\ref{J}). Furthermore, $M:= (M_{jk})_{j,k\in V} \in \mR^{m\x n}$ is the  network-field coupling matrix whose blocks $M_{jk}\in \mR^{m_j\x n_k}$ are computed as
\begin{equation}
\label{Mjk}
    M_{jk}
    =
    \left\{
    \begin{matrix}
      M_j & {\rm if} & j = k\\
      D_{jk}^\rT N_{kj} & {\rm if} & k \in N_j^+\\
      0 & {\rm if} & k \not\in \{j\} \bigcup N_j^+
    \end{matrix}
    \right..
\end{equation}
Also, $R:= (R_{jk})_{j,k\in V}\in \mS_n$ is the network energy matrix
with blocks $R_{jk}\in \mR^{n_j \x n_k}$ given by
\begin{align}
\label{Rjj}
    R_{jj}
    & =   R_j,\\
\label{Rjk}
  R_{jk}
  & =
  \chi_{N_j^0}(k)
  R_{jk}^0
  +
    \chi_{N_j^-}(k)M_{kj}^\rT J_k M_k
    -
    \chi_{N_j^+}(k)
    M_j^\rT J_j M_{jk}
\end{align}
for all $j, k \in V$ such that $j\ne k$, where $\chi_S$ is the indicator function of a set $S$ (that is, $\chi_S(z) = 1$ if $z \in S$ and $\chi_S(z)=0$ if $z\not \in S$).\hfill$\square$
\end{thm}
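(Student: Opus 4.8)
The plan is to assemble the componentwise QSDEs (\ref{dXj})--(\ref{dYj}) into the single equation (\ref{dX}) by substituting the output-field increments $\rd Y_{kj}$ (given by (\ref{dYjk})) into the field-mediated coupling terms of (\ref{dXj}), and then identify the resulting drift and diffusion coefficients block by block. Concretely, I would first write
\[
    \sum_{k\in N_j^-} E_{jk}\,\rd Y_{kj}
    =
    \sum_{k\in N_j^-} E_{jk}\big(C_{kj}X_k\,\rd t + D_{kj}\,\rd W_k\big),
\]
so that $\rd X_j$ becomes a linear combination of $X_\ell\,\rd t$ (for $\ell\in\{j\}\cup N_j^0\cup N_j^-$) and $\rd W_\ell$ (for $\ell\in\{j\}\cup N_j^-$). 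Stacking over $j\in V$ then produces $\rd X = AX\,\rd t + B\,\rd W$ with $A_{j\ell}$ and $B_{j\ell}$ read off directly. Using $k\in N_j^-\iff j\in N_k^+$ together with the relabelling $N_{jk}\leftrightarrow N_{kj}$ and $D_{jk}\leftrightarrow D_{kj}$, the off-diagonal entries are $B_{j\ell}=E_{j\ell}D_{\ell j}=2\Theta_j N_{j\ell}^\rT D_{\ell j}$ for $\ell\in N_j^-$, which by symmetry of indices matches $2\Theta_j M_{j\ell}^\rT$ with $M_{j\ell}=D_{j\ell}^\rT N_{\ell j}$ as claimed in (\ref{Mjk}); the diagonal block $B_{jj}=B_j=2\Theta_j M_j^\rT$ is immediate, and entries outside $\{j\}\cup N_j^+$ vanish, giving $B=2\Theta M^\rT$.

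For the drift matrix $A$, the target identity $A=2\Theta(R+M^\rT J M)$ requires showing that the off-diagonal blocks produced by the substitution, namely $A_{jk}$ from (\ref{AjkBj}) for $k\in N_j^0$ and $E_{jk}C_{kj}$ for $k\in N_j^-$, together with contributions hidden in the $M^\rT J M$ term, reconstruct $2\Theta_j R_{jk}$ with $R_{jk}$ as in (\ref{Rjk}). The key computation is expanding the $(j,k)$ block of $M^\rT J M$: since $(M^\rT J M)_{jk}=\sum_{\ell} M_{\ell j}^\rT J_\ell M_{\ell k}$ and $M$ has the sparse structure (\ref{Mjk}), only $\ell=j$ and $\ell=k$ (and $\ell$ with $j,k\in N_\ell^+$, which cannot both hold generically but must be checked) survive. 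The $\ell=j$ term gives $M_j^\rT J_j M_{jk}=M_j^\rT J_j D_{jk}^\rT N_{kj}$ when $k\in N_j^+$, and the $\ell=k$ term gives $M_{kj}^\rT J_k M_k = N_{jk}^\rT D_{kj} J_k M_k$ when $j\in N_k^+$, i.e.\ $k\in N_j^-$. Comparing with the diagonal block $A_j$ in (\ref{Aj}) (which absorbs the $N_{j\ell}^\rT\wt J_{\ell j}N_{j\ell}$ self-energy correction into $R_{jj}=R_j$), and using $C_{kj}=2D_{kj}J_k M_k$ from (\ref{Cjk}) so that $E_{jk}C_{kj}=2\Theta_j N_{jk}^\rT\cdot 2D_{kj}J_k M_k=2\Theta_j\cdot 2 M_{kj}^\rT J_k M_k$, one sees the $\chi_{N_j^-}(k)M_{kj}^\rT J_k M_k$ and $-\chi_{N_j^+}(k)M_j^\rT J_j M_{jk}$ terms of (\ref{Rjk}) emerge exactly (the relative signs and the factor of $2$ matching because $E_{jk}C_{kj}$ contributes $+2\cdot 2\Theta_j M_{kj}^\rT J_k M_k$ while $2\Theta_j(M^\rT J M)_{jk}$ contributes $2\Theta_j(M_{kj}^\rT J_k M_k - M_j^\rT J_j M_{jk})$, and the drift from the substituted $\rd Y$ supplies the missing $+2\Theta_j M_{kj}^\rT J_k M_k$ needed to make the $N_j^-$ coefficient $+1$ rather than $-1$). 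The $\chi_{N_j^0}(k)R_{jk}^0$ piece comes verbatim from $A_{jk}=2\Theta_j R_{jk}^0$.

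The main obstacle I anticipate is the careful bookkeeping of the field-mediated terms: keeping straight which index sits in which neighbourhood ($k\in N_j^-$ versus $k\in N_j^+$), correctly transposing the blocks $D_{jk}$, $N_{jk}$ under the index swap, and verifying that the quantum Ito cross-terms do not generate unexpected extra drift. In particular one must confirm, using (\ref{DJD}) and $C_j=2D_jJ_jM_j$, that the self-energy term $\sum_{\ell\in N_j^-}N_{j\ell}^\rT\wt J_{\ell j}N_{j\ell}$ in (\ref{Aj}) is precisely what is needed so that no residual diagonal correction survives in $R_{jj}$, i.e.\ that the diagonal block of $2\Theta M^\rT J M$ equals $2\Theta_j(M_j^\rT J_j M_j + \sum_{\ell\in N_j^-}N_{j\ell}^\rT\wt J_{\ell j}N_{j\ell})$; this uses $(M^\rT JM)_{jj}=M_j^\rT J_j M_j + \sum_{k\in N_j^+}N_{kj}^\rT D_{kj}J_k D_{kj}^\rT N_{kj}$ and the identity $D_{kj}J_kD_{kj}^\rT=\wt J_{kj}$ from (\ref{Jkj}), (\ref{DJD}), after relabelling $k\in N_j^+\iff j\in N_k^-$. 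Once the neighbourhood symmetries $k\in N_j^0\iff j\in N_k^0$ and $k\in N_j^+\iff j\in N_k^-$ are used consistently, everything should collapse to (\ref{AB})--(\ref{Rjk}); symmetry $R_{jk}=R_{kj}^\rT$ of the assembled $R$ then follows from the PR structure, confirming $R\in\mS_n$.
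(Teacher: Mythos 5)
Your approach is essentially the paper's: substitute the output QSDEs (\ref{dYjk}) into (\ref{dXj}), read off the drift and diffusion blocks, and match them against $2\Theta(R+M^\rT JM)$ and $2\Theta M^\rT$ using the sparsity pattern (\ref{Mjk}). The paper merely organises the matching step more systematically, collecting all field-induced drift corrections into an auxiliary block matrix $\Gamma$ and proving $\bA(\Gamma)=M^\rT JM$ for its antisymmetric part, so that the symmetric energy matrix $R=R^0+\bS(\Gamma)$ exists, is unique, and is manifestly symmetric; your block-by-block verification reaches the same formulas (\ref{Rjj}), (\ref{Rjk}).

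Two points need repair. First, your dismissal of the terms $\ell\notin\{j,k\}$ in $(M^\rT JM)_{jk}=\sum_\ell M_{\ell j}^\rT J_\ell M_{\ell k}$ on the grounds that $j,k\in N_\ell^+$ ``cannot both hold generically'' is wrong: a node $\ell$ can perfectly well send output fields to two distinct neighbours $j$ and $k$ (node $3$ in Fig.~\ref{fig:net} does exactly this), so the terms $N_{j\ell}^\rT D_{\ell j}J_\ell D_{\ell k}^\rT N_{k\ell}$ are genuinely present and vanish only because $D_{\ell j}J_\ell D_{\ell k}^\rT=0$ for $j\ne k$ by (\ref{DJD}); this identity, which you invoke only for the diagonal block, is needed off-diagonally as well. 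Second, the symmetry $R_{jk}=R_{kj}^\rT$ should not be ``confirmed from the PR structure'' --- that is circular, since physical realizability is a consequence of having exhibited a symmetric $R$ in (\ref{AB}), not a premise. It is, however, immediate to verify directly from (\ref{Rjk}) using $\chi_{N_k^0}(j)=\chi_{N_j^0}(k)$ and $\chi_{N_k^\mp}(j)=\chi_{N_j^\pm}(k)$ (or, as in the paper, from $R=R^0+\bS(\Gamma)$). The remaining index slips (e.g.\ writing $M_{j\ell}^\rT$ where $(M^\rT)_{j\ell}=M_{\ell j}^\rT$ is meant, and summing the diagonal self-energy over $k\in N_j^+$ instead of $\ell\in N_j^-$) are cosmetic but should be cleaned up, since the whole proof is an exercise in exactly this bookkeeping.
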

\begin{proof}
Substitution of (\ref{Aj})--(\ref{Cjk}), (\ref{dYjk}) into (\ref{dXj}) leads to
\begin{align}
\nonumber
    \rd X_j
    & =
    2\Theta_j
    \Big(
    \Big(
    \big(
        R_j + M_j^{\rT}J_j M_j +
        \sum_{\ell  \in N_j^-}
        N_{j\ell }^{\rT}
        \wt{J}_{\ell j}
        N_{j\ell }
    \big)X_j
    +
    \sum_{k \in N_j^0}
    R_{jk}^0X_k
    \Big) \rd t
    +
    M_j^{\rT}\rd W_j
    +
    \sum_{k \in N_j^-}
    N_{jk}^{\rT}D_{kj} (2J_k M_kX_k \rd t + \rd W_k)
    \Big),\\
\nonumber
    & =
    2\Theta_j
    \Big(
    \Big(
        \sum_{k \in V}
        R_{jk}^0 X_k
        +
    \big(
        \overbrace{
        M_j^{\rT}J_j M_j
        +
        \sum_{\ell  \in N_j^-}
        \underbrace{N_{j\ell }^{\rT}
        \wt{J}_{\ell j}
        N_{j\ell }}_{M_{\ell j}^\rT J_\ell M_{\ell j}}
        }^{\gamma_{jj}}
    \big)X_j
    +
        \sum_{k \in N_j^-}
            \overbrace{
        2
    \underbrace{N_{jk}^{\rT}D_{kj}}_{M_{kj}^\rT} J_k M_k}^{\gamma_{jk}}
    X_k
    \Big) \rd t
    +
    \sum_{k \in V}
    M_{kj}^\rT \rd W_k
    \Big)\\
\label{dXj1}
    & =
    2\Theta_j
    \sum_{k \in V}
    (
    (R_{jk}^0 + \gamma_{jk}) X_k \rd t
    +
    M_{kj}^\rT \rd W_k
    ),
    \qquad
    j \in V,
\end{align}
where use is also made of (\ref{Jkj}), (\ref{Mjk}) and the blocks $R_{jk}^0$ of the energy matrix $R^0$ from (\ref{HN})--(\ref{Hjk}) which does not take into account the field-mediated coupling between the component OQHOs. Here, $\Gamma:= (\gamma_{jk})_{j,k\in V} \in \mR^{n\x n}$ is an auxiliary matrix whose blocks $\gamma_{jk} \in \mR^{n_j \x n_k}$ are defined on the  right-hand side of (\ref{dXj1}):
\begin{equation}
\label{gamjk}
  \gamma_{jk}
  :=
  \left\{
  \begin{matrix}
        \sum_{\ell \in \{j\}\bigcup N_j^-}
        M_{\ell j}^\rT J_\ell M_{\ell j} & {\rm if} &  j = k\\
        2M_{kj}^\rT J_k M_k & {\rm if} & k \in N_j^-\\
        0 & {\rm if} &
        k \not\in \{j\}\bigcup N_j^-
  \end{matrix}
  \right.
\end{equation}
for all $j,k \in V$. This allows the QSDEs (\ref{dXj1}) to be assembled into one QSDE
\begin{equation}
\label{dX1}
    \rd X
    =
    2\Theta (R^0 + \Gamma)X\rd t
    +
    \underbrace{2\Theta M^\rT}_{B} \rd W,
\end{equation}
where the second notation from  (\ref{AB}) is also used. Now note that
\begin{equation}
\label{bAGam}
  \bA(\Gamma) = M^\rT J M  ,
\end{equation}
where $\bA$ is the antisymmetrizer of square matrices: $\bA(\Gamma) := \frac{1}{2}(\Gamma - \Gamma^\rT)$. Indeed,
\begin{equation}
\label{bAGamjj}
    (\bA(\Gamma))_{jj} = \bA(\gamma_{jj}) = \gamma_{jj} = (M^\rT J M)_{jj} ,
    \qquad
    j \in V
\end{equation}
in view of the antisymmetry of the diagonal blocks $\gamma_{jj}$ of $\Gamma$ in (\ref{gamjk}) inherited from the matrices $J_j$ in  (\ref{Jj}). Furthermore,
\begin{equation}
\label{bAGamjk}
    (\bA(\Gamma))_{jk}
    =
    \frac{1}{2}(\gamma_{jk} - \gamma_{kj}^\rT)
    =
    \chi_{N_j^-}(k)M_{kj}^\rT J_k M_k
    +
    \chi_{N_j^+}(k)
    M_j^\rT J_j M_{jk}
\end{equation}
for any $j,k \in V$ such that $j\ne k$.
On the other hand, (\ref{Mjk}) yields
\begin{equation}
\label{MJMjk}
    (M^\rT J M)_{jk}
    =
    \sum_{\ell \in V}
    M_{\ell j}^\rT J_\ell M_{\ell k}
    =
    \chi_{N_j^+}(k)
    M_j^\rT J_j M_{jk}
    +
    \chi_{N_j^-}(k)
    M_{kj}^\rT J_k M_k
    +
    \sum_{\ell \in (N_j^-\bigcap N_k^-)\setminus \{j,k\}}
    N_{j\ell}^\rT
    \underbrace{D_{\ell j}
    J_\ell
    D_{\ell k}^\rT}_{0} N_{k\ell}
\end{equation}
for all $j,k \in V$ such that $j\ne k$, with (\ref{DJD}) making the rightmost sum in (\ref{MJMjk}) vanish.  A comparison of (\ref{bAGamjk}) with (\ref{MJMjk}) leads to
$$    (\bA(\Gamma))_{jk} = (M^\rT J M)_{jk} ,
    \qquad
    j, k \in V,\
    j\ne k,
$$
which, in combination with (\ref{bAGamjj}), establishes (\ref{bAGam}). Due to (\ref{bAGam}),
there exists a unique matrix $R = R^\rT \in \mR^{n \x n}$ satisfying
\begin{equation}
\label{RGamma}
    R^0 + \Gamma = R + M^\rT J M.
\end{equation}
Indeed, since the energy matrix $R^0$ in (\ref{HN}) is symmetric while $M^\rT J M$ is antisymmetric, then, in view of (\ref{bAGam}), such a matrix $R$ is found by taking the symmetric part on both sides of (\ref{RGamma}):
\begin{equation}
\label{RR0Gam}
    R = R^0 + \bS(\Gamma),
\end{equation}
where $\bS$ is the symmetrizer of square matrices: $\bS(\Gamma) := \frac{1}{2}(\Gamma + \Gamma^\rT)$, with the equality of the antisymmetric parts of (\ref{RGamma}) being secured by (\ref{bAGam}). In view of (\ref{RGamma}), the QSDE (\ref{dX1}) takes the form (\ref{dX}) with the matrices (\ref{AB}). It now remains to compute the blocks of the matrix $R$ in (\ref{RR0Gam}). By the antisymmetry of $\gamma_{jj}$ in (\ref{bAGamjj}), the diagonal blocks of $R$ coincide with those of $R^0$:
$$
    R_{jj} = R_j + \bS(\gamma_{jj}) = R_j,
    \qquad
    j \in V,
$$
thus establishing (\ref{Rjj}).
The off-diagonal blocks of $R$ are found by using those of $\bS(\Gamma)$:
\begin{equation}
\label{bSGamjk}
    (\bS(\Gamma))_{jk}
    =
    \frac{1}{2}(\gamma_{jk} + \gamma_{kj}^\rT)
    =
    \chi_{N_j^-}(k)M_{kj}^\rT J_k M_k
    -
    \chi_{N_j^+}(k)
    M_j^\rT J_j M_{jk}
\end{equation}
for all $j, k\in V$ such that $j\ne k$,
where (\ref{gamjk}) is used similarly to (\ref{bAGamjk}).  A combination of (\ref{RR0Gam}) with  (\ref{bSGamjk}) establishes (\ref{Rjj}), (\ref{Rjk}), thus completing the proof.
\end{proof}

The above proof (which uses the linearity of the component QSDEs instead of the general quantum feedback network formalism \cite{GJ_2009}) clarifies the role of the quadratic terms in (\ref{Aj}) for the property (\ref{bAGam}) (including (\ref{bAGamjj})). The latter is crucial for the existence and uniqueness  of the symmetric energy matrix $R$ in the representation (\ref{AB}) leading to the PR property
$$
    A \Theta + \Theta A^\rT + BJB^\rT = 0,
$$
which corresponds to the preservation of the CCRs (\ref{XXcomm}) for the internal network variables  over the course of time \cite{JNP_2008,SP_2012} and also substantially uses the commutativity (\ref{dWWXYcomm}).

Also note that the individual energy matrices $R_j$ of the component OQHOs  enter the network energy matrix $R$ only through its diagonal blocks (\ref{Rjj}). Furthermore, in addition to the direct energy coupling between the OQHOs, the network-field coupling and the field-mediated coupling contribute to $R$ only through its off-diagonal blocks (\ref{Rjk}). The resulting network Hamiltonian $H$ is related to $H^0$ in (\ref{HN}) as
$$
    H
     = \frac{1}{2}X^\rT R X
     = H^0
    +
    \frac{1}{2}
    \sum_{j,k \in V}
    X_j^\rT
        (\chi_{N_j^-}(k)M_{kj}^\rT J_k M_k
    -
    \chi_{N_j^+}(k)
    M_j^\rT J_j M_{jk}   )
    X_k
    =
    H^0
    -
    \sum_{j \in V}
    X_j^\rT
    M_j^\rT J_j
    \sum_{k\in N_j^+}
    D_{jk}^\rT N_{kj}
    X_k,
$$
where use is also made of the blocks (\ref{Mjk}) of the network-field coupling matrix $M$ which does not depend on the  matrix $R^0$.

The QSDE (\ref{dX}), coming from the internal direct energy and field-mediated coupling in the quantum network and its interaction with the external fields, makes the network variables evolve in time $t\> 0$  as
\begin{equation}
\label{Xsol}
    X(t)
    =
    \re^{tA} X(0)
    +
    Z(t),
    \qquad
    Z(t)
    :=
    \int_0^t
    \re^{(t-s)A}
    B
    \rd W(s),
\end{equation}
with the matrices $A$, $B$ from (\ref{AB}). The fact that
$\re^{tA}$ is a usual matrix exponential makes the quantum trajectory $X$ in (\ref{Xsol}) (despite the operator-valued nature of its component processes) similar to the solution of  classical linear SDEs. This resemblance is due to the linearity of the QSDE (\ref{dX}).

\section{QUANTUM NETWORK AS TEMPORARY MEMORY SYSTEM}
\label{sec:mem}

The linear network response $\re^{tA} X(0)$ to the initial network variables in (\ref{Xsol}) can be regarded as a ``useful signal'' carrying information about $X(0)$.  It is particularly so on time scales $t \ll \frac{1}{\|A\|}$, where $\|\cdot \|$ is the operator norm of a matrix. At such time horizons, the deviation of $\re^{tA} X(0)$ from the initial condition $X(0)$ is relatively small in any suitable sense. This allows the network to be used as a temporary memory (in the Heisenberg picture) for storing (at least approximately) $X(0)$. However, the network response to the initial condition in (\ref{Xsol}) is ``corrupted'' by the zero-mean Gaussian quantum Ito process \cite[Section~3]{VPJ_2018a} $Z:= (Z_j)_{j \in V}$ adapted to the filtration of the Fock space $\fF$ in (\ref{fH}) and driven by the quantum Wiener process $W$ in the vacuum state (\ref{ups}), with $Z(0)=0$.  Note that the process $Z$ commutes with the components of $X(0)$ (since the latter act on the different initial network space $\fH_0$ in (\ref{fH})):
\begin{equation}
\label{XZcomm}
    [X(0),Z(t)^\rT] = 0,
    \qquad
    t \> 0 .
\end{equation}
Furthermore, if the network-field quantum state on the space $\fH$ in (\ref{fH}) also has a tensor-product structure given by
\begin{equation*}
\label{rho}
    \rho:= \rho_0\ox \ups,
\end{equation*}
with $\rho_0$ the initial network state on $\fH_0$, then, in addition to (\ref{XZcomm}), the process  $Z$ is statistically independent  of $X(0)$. In particular,
\begin{equation}
\label{EXZ}
  \bE(X(0)Z(t)^\rT)
  =
  \bE X(0)
  \underbrace{\bE Z(t)^\rT}_{0}
  =
  0,
\end{equation}
where $\bE\zeta := \Tr (\rho \zeta) $ is the expectation of an operator $\zeta$ on the network-field space $\fH$.

The above commutation and statistical properties can be used in order to quantify the ability of the quantum network  as a memory system to retain the initial condition of a  quantum process
\begin{equation}
\label{phiX}
    \varphi(t):= F X(t) = \sum_{j \in V} F_j X_j(t),
    \qquad
    t \> 0.
\end{equation}
Here, $F \in \mR^{s\x n}$ is a given constant matrix which has  full row rank
\begin{equation}
\label{Frank}
    \rank F = s \< n
\end{equation}
and is partitioned into blocks $F_j \in \mR^{s \x n_j}$ in accordance with (\ref{X}):
\begin{equation}
\label{FFF}
  F := (F_j^\rT)_{j \in V}^\rT.
\end{equation}
The rows of $F$ specify the coefficients  of $s$ independent linear combinations (in particular, a subset) of the network variables which need (or have a better chance) to be preserved over a bounded time interval. Note that, the process $\varphi$ in  (\ref{phiX}), in general,   involves the internal variables of different OQHOs in the network and thus corresponds to a spatially distributed  storage of quantum information.
The deviation of $\varphi$ from its  initial value is related by
\begin{equation}
\label{eta}
    \eta(t):= \varphi(t) - \varphi(0) = F \xi(t)
\end{equation}
to a similar deviation for the underlying network process $X$:
\begin{equation}
\label{xi}
     \xi(t)
    :=
    X(t)-X(0)
    =
    \alpha_t X(0) + Z(t).
\end{equation}
Here, the last equality follows from (\ref{Xsol}) and employs an auxiliary matrix
\begin{equation}
\label{exp}
  \alpha_t := \re^{tA}-I_n.
\end{equation}
Following \cite{VP_2024_ANZCC,VP_2024_EJC},  we will describe the quantum network memory performance in terms of a quadratic form
\begin{equation}
\label{Q}
    Q(t)
    :=
    \eta(t)^\rT \eta(t)
    =
    \xi(t)^\rT \Sigma\xi(t),
\end{equation}
where, in accordance with (\ref{eta}), the weighting matrix $0\preccurlyeq\Sigma \in \mS_n$  is factorised as
\begin{equation}
\label{FF}
    \Sigma := F^\rT F,
\end{equation}
so that $\rank \Sigma = s$ due to (\ref{Frank}).
Note that $\xi(0) = 0$, $\eta(0)=0$, $\alpha_0 = 0$ and  $Q(0) = 0$ in view of (\ref{eta})--(\ref{Q}). At an arbitrary moment of time $t\> 0$,   the   ``size'' of the deviation $\eta(t)$ in  (\ref{eta})   can be quantified by a mean-square functional
\begin{equation}
\label{Del}
    \Delta(t)
     :=
    \bE
    Q(t)
    =
    \bra
    \Sigma,
    \Re \Ups(t)
    \ket
    =
    \|F \alpha_t \sqrt{P}\|^2 + \bra \Sigma, \Re \Lambda(t) \ket
\end{equation}
using (\ref{Q}),  as
proposed in  \cite{VP_2024_ANZCC,VP_2024_EJC}, where $\|\cdot\|$, $\bra\cdot, \cdot\ket $   are the Frobenius norm and inner product   of matrices \cite{HJ_2007}, respectively. Here,
$$
    \Ups(t)
     :=
    \bE (\xi(t)\xi(t)^\rT)
    =
    \alpha_t
    \Pi
    \alpha_t^\rT
    +
    \Lambda(t)
$$
 is the one-point second-moment matrix  of the process $\xi$
found in \cite{VP_2024_ANZCC} by using (\ref{XZcomm}), (\ref{EXZ}), (\ref{xi}), (\ref{exp}) and the second-moment matrix
\begin{equation}
\label{EXX0}
    \Pi
     :=
    \bE (X(0)X(0)^\rT)
    =
    P + i\Theta,
    \qquad
    P:= \Re \Pi
\end{equation}
of the initial network variables in $X(0)$, including  their augmented CCR matrix $\Theta$ from (\ref{XXcomm}),
and the quantum covariance matrix
\begin{equation}
\label{V}
    \Lambda(t)
      :=
    \bE(Z(t)Z(t)^\rT)
     =
    \int_0^t
    \re^{sA}
    B
    \Omega
    B^\rT
    \re^{sA^\rT}
    \rd s,
    \quad
    t \> 0
\end{equation}
of the process $Z$ in (\ref{Xsol}), with
$\Omega$ the quantum Ito matrix from (\ref{dWdW}). The matrix $\Lambda(t)$ in (\ref{V}) coincides with the controllability Gramian of the pair $(A, B\sqrt{\Omega})$ over the time interval $[0,t]$ and   satisfies  the initial value problem for the Lyapunov ODE
\begin{equation}
\label{VALE}
    \mathop{\Lambda}^{\centerdot}(t)
    =
    A\Lambda(t) + \Lambda(t)A^\rT + \mho,
    \qquad
    t \> 0,
    \quad
    \Lambda(0) = 0,
\end{equation}
where
\begin{equation}
\label{BOB}
    \mho := B\Omega B^\rT.
\end{equation}
From (\ref{VALE}), the time derivatives $\Lambda^{(k)}:= \rd^k \Lambda/\rd t^k$ can be computed recursively as
$
    \Lambda^{(k+1)} = A\Lambda^{(k)} + \Lambda^{(k)}A^\rT  + \delta_{k0} \mho$ for any $
    k \> 0
$,
where $\delta_{jk}$ is the Kronecker delta. In particular, the first three derivatives at $t=0$  are
\begin{align}
\label{V1}
    \dot{\Lambda}(0)
    & = \mho,\\
\label{V2}
    \ddot{\Lambda}(0)
    & = A\mho  + \mho A^\rT ,\\
\label{V3}
    \dddot{\Lambda}(0)
    & =
    A^2\mho + \mho (A^\rT)^2 + 2 A \mho A^\rT.
\end{align}
In \cite{VP_2024_ANZCC},  the relations (\ref{BOB})--(\ref{V3}) were used in order to compute
the following time derivatives of $\Delta(t)$ in (\ref{Del}) at $t=0$:
\begin{align}
\label{Deldot}
  \dot{\Delta}(0)
  & =
   \|F B\| ^2,\\
\label{Delddot}
   \ddot{\Delta}(0)
   & =
   \bra \Sigma, ABB^\rT+BB^\rT A^\rT + 2A P A^\rT\ket ,
\end{align}
thus leading to the short-horizon asymptotic behaviour of the function
\begin{equation}
\label{Delasy0}
    \Delta(t)
     =
     \dot{\Delta}(0)
     t +
     \frac{1}{2}
     \ddot{\Delta}(0)
     t^2
     +
     O(t^3),
     \qquad
     {\rm as}\
     t \to 0+.
\end{equation}

For improving the performance of the network as a quantum memory system for storing the initial condition $\varphi(0)$ of the process (\ref{phiX}), it is beneficial to minimize the mean-square deviation functional $\Delta(t)$ in (\ref{Del}) at a suitable time $t> 0$. Such minimization can be carried out over the energy and coupling parameters of the network. This problem  can also be approached through maximizing the  memory decoherence time \cite{VP_2024_ANZCC,VP_2024_EJC}
\begin{equation}
\label{tau}
    \tau(\eps)
    :=
    \inf
    \{
        t\> 0:\
        \Delta(t)
        \>
        \eps
        \Delta_*
    \}
\end{equation}
as a horizon by which the subnetwork variables do not deviate from their initial values
``too far''. The threshold value $\eps \Delta_*$ in (\ref{tau}) for the mean-square deviation functional (\ref{Del}) involves
\begin{equation}
\label{Ephiphi}
    \Delta_*
    :=
    \bE (\varphi(0)^\rT \varphi(0) ) = \bra \Sigma, \Pi\ket = \|F\sqrt{P}\|^2
\end{equation}
as a reference scale (obtained from (\ref{phiX}),  (\ref{FF}), (\ref{EXX0})) and a dimensionless fidelity parameter $\eps>0$. The latter specifies a relative error threshold  for $\varphi(t)$ to approximately reproduce $\varphi(0)$. While $\tau(0)=0$ by   $\Delta(0)=0$ in (\ref{Del}), we eliminate from consideration the trivial case $\Delta_* = 0$  when $\tau(\eps) = 0$ for any $\eps>0$ by assuming  that
\begin{equation}
\label{Fpos}
    F
    \sqrt{P}
    \ne 0.
\end{equation}
Under this  condition, the decoherence time $\tau(\eps)$ in (\ref{tau}), which  is a nondecreasing function of the fidelity parameter $\eps$, satisfies $\tau(\eps)>0$ for any $\eps>0$.

The condition (\ref{Fpos}) is relevant if $s<n$ when, in  view of (\ref{Frank}),   the columns of the matrix $F$ are linearly dependent. Since
\begin{equation}
\label{phiphicomm}
    [\varphi(t), \varphi(t)^\rT]
    =
    2i F\Theta F^\rT
\end{equation}
by (\ref{XXcomm}), (\ref{phiX}), the fulfillment of $F \sqrt{P} = 0$ in (\ref{Ephiphi}) implies that $i F\Theta F^\rT = F\Pi F^\rT \succcurlyeq 0 $ and hence, $F\Theta F^\rT=0$, in which case,    the components of  $\varphi(t)$ commute with each other at any time $t\> 0$. Here, use is made of the fact that a purely imaginary antisymmetric matrix is positive semi-definite if and only if it is zero. Therefore, if the vector $\varphi(0)$, to be retained by the network,  is essentially quantum in the sense that it contains at least one pair of noncommuting variables, that is,
\begin{equation}
\label{FOF}
  F\Theta F^\rT
  =
  \sum_{j \in V}
  F_j \Theta_j F_j^\rT
  \ne 0
\end{equation}
in (\ref{phiphicomm}),
then the condition (\ref{Fpos}) is necessarily satisfied, which makes (\ref{FOF}) sufficient for (\ref{Fpos}).

\section{APPROXIMATE MEMORY DECOHERENCE TIME MAXIMIZATION}
\label{sec:tauhatmax}

By using the infinite differentiablity of $\Delta(t)$ in (\ref{Del}) with respect to $t\> 0$ along with its time derivatives (\ref{Deldot}), (\ref{Delddot}), it was established in \cite{VP_2024_ANZCC}  that the decoherence time $\tau(\eps)$ in (\ref{tau})
has
the following first and second right derivatives at $\eps=0$:
\begin{align}
\label{ratio}
    \tau'(0)
    & =
    \frac{\|F\sqrt{P}\| ^2}{\dot{\Delta}(0)}
    =
    \frac{\|F\sqrt{P}\| ^2}{\|FB\| ^2}>0, \\
\label{tau2}
    \tau''(0)
     &  =
    -
    \frac{\ddot{\Delta}(0) \tau'(0)^2}
    {\dot{\Delta}(0)}
    =
    -
    \frac{\bra \Sigma, A BB^\rT + BB^\rT A^\rT  + 2APA^\rT\ket
    \|F\sqrt{P}\|^4}
    {\|FB\|^6}.
\end{align}
Here, in addition to (\ref{Fpos}) (or its stronger version (\ref{FOF})), it is assumed that
\begin{equation}
\label{pos1}
    FB \ne 0.
\end{equation}
Despite having the physical dimension of time,  both quantities $\tau'(0)$ and $\tau''(0)$  in (\ref{ratio}), (\ref{tau2}) resemble the signal-to-noise ratio since their numerators involve $F PF^\rT$ pertaining to the initial condition $\varphi(0)$ to be stored by the network (see (\ref{phiX}), (\ref{EXX0})), while $FBB^\rT F^\rT = FB\Re \Omega B^\rT F^\rT$ in the denominators is associated with the quantum diffusion matrix in
$$
    \rd \varphi \rd \varphi^\rT = FB\Omega B^\rT F^\rT \rd t
$$
coming from the network-field  coupling in view of (\ref{dWdW}), (\ref{AB}). Therefore, under the condition (\ref{pos1}), the asymptotic behaviour of $\tau(\eps)$  in the high-fidelity  limit (of small values of $\eps>0$)  can be described by an appropriately truncated Taylor series approximation \begin{equation}
\label{tau12}
    \tau(\eps) =
      \wh{\tau}(\eps)
     + O(\eps^3),
     \qquad
       \wh{\tau}(\eps)
  :=
  \tau'(0) \eps + \frac{1}{2}\tau''(0) \eps^2,
\end{equation}
as $\eps \to 0+$.
Note that the quantity $\tau'(0)$ in (\ref{ratio}) is independent of the matrix $A$, whereas $\tau''(0)$ in (\ref{tau2})  depends on both matrices $A$ and $B$,
with its dependence on $A$ being concave quadratic. As observed in \cite{VP_2024_ANZCC},   these  properties are beneficial for an effective solution of the approximate decoherence time maximization problem
\begin{equation}
\label{tauhatmax}
  \wh{\tau}(\eps)
  \longrightarrow
  \sup
\end{equation}
(for a given small value of $\eps$)
over a suitable set of those energy and coupling parameters of the network that can be varied.
More precisely, for fixed network-field and field-mediated coupling, when  the matrices $M_j$, $D_{jk}$, $N_{kj}$ are fixed for all $j \in V$ and $k \in N_j^+$, and hence,  so is the composite network-field coupling matrix $M$ in (\ref{Mjk}), then the matrix $B$ in (\ref{AB}) is also fixed. If, in addition, all of the individual energy matrices $R_j$ of the component OQHOs are also fixed, then the matrix $A$ in (\ref{AB}) is an affine function of the network energy matrix $R$ and thus of the direct energy coupling matrices $R_{jk}^0$ in  view of (\ref{Rjk}). This allows the maximization problem (\ref{tauhatmax}) over the matrices $R_{jk}^0$ to be reduced  to a convex quadratic minimization  problem in  the following theorem which is a network extension of \cite[Theorems 1, 2]{VP_2024_ANZCC}.  In order to formulate it, we define, for any $j \in V$ and $k \in N_j^0$,   a linear operator $g_{jk}$ on $\mR^{n_j \x n_k}$  by
\begin{equation}
\label{gjk}
    g_{jk}(N)
     :=
    \Theta_j \Sigma_{jj}\Theta_j N P_{kk}   +
    P_{jj}N \Theta_k \Sigma_{kk}\Theta_k
     +   \Theta_j \Sigma_{jk}\Theta_k N^\rT  P_{jk}
    +
    P_{jk}N^\rT \Theta_j \Sigma_{jk}\Theta_k
    =
    (g_{kj}(N^\rT))^\rT,
    \qquad
    N \in \mR^{n_j \x n_k},
\end{equation}
where $\Theta_j$ are the CCR matrices from (\ref{Xcomm}), (\ref{Thetaj}), and use is made of the appropriate blocks $(\cdot)_{jk}$ of the matrices $\Sigma$, $P$ from (\ref{FF}), (\ref{EXX0}). Also, $K_{jk}: \mS_n \to \mR^{n_j\x n_k}$  is an affine map given by
\begin{equation}
\label{KLjk}
    K_{jk}(R):=
    L_{jk}
    +
    2
    (\bS
    (\Theta \Sigma \Theta RP))_{jk},
\end{equation}
with the matrix
\begin{equation}
\label{Ljk}
    L_{jk}
    :=
      \frac{1}{2}
    (\bS
    (\Theta \Sigma B(B^\rT + 2J MP)))_{jk}
\end{equation}
being a constant term (which does   not depend on $R$) in (\ref{KLjk}).

\begin{thm}
\label{th:tauhatmax1}
Suppose the network-field and field-mediated coupling matrices $M_j$, $N_{jk}$ are fixed along with the matrices $D_{jk}$ and the individual energy matrices $R_j$ of the component OQHOs in the network. Also,
suppose
the fidelity level $\eps$ in (\ref{tau}) and the matrix $F$ in (\ref{phiX}), (\ref{FF})
are fixed together with the matrix
$P$  in (\ref{EXX0}), and the conditions (\ref{Fpos}), (\ref{pos1}) are satisfied. Then the direct energy coupling matrices $R_{jk}^0$  deliver a solution to the
problem
\begin{equation}
\label{taumaxRjk}
    \wh{\tau}(\eps)
    \longrightarrow
    \sup,
    \qquad
    R_{jk}^0=R_{kj}^{0\rT} \in \mR^{n_j\x n_k},
    \quad
    j \in V,
    \quad
    k \in N_j^0
\end{equation}
of maximizing the approximate decoherence time from (\ref{tau12}) if
and only if
\begin{equation}
\label{Rjksol}
    g_{jk}(R_{jk}^0) + K_{jk}(\breve{R}_{jk}) = 0,
\end{equation}
where the maps $g_{jk}$, $K_{jk}$ are given by (\ref{gjk})--(\ref{Ljk}).
Here, the matrix $\breve{R}_{jk}\in \mS_n$ is obtained by letting $R_{jk}^0 = R_{kj}^{0\rT} := 0$ in (that is, removing $R_{jk}^0$ and $R_{kj}^0$ from) the network energy matrix $R$ in (\ref{Rjj}), (\ref{Rjk}), so that $K_{jk}(\breve{R}_{jk})$ does not depend on $R_{jk}^0$.
\hfill$\square$
\end{thm}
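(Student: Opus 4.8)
The plan is to reduce the constrained maximization of $\wh{\tau}(\eps)$ to an unconstrained convex quadratic minimization over the free blocks $R_{jk}^0$ ($j\in V$, $k\in N_j^0$), and then write out the first-order optimality condition in block form. First I would recall from \eqref{tau12} that $\wh{\tau}(\eps) = \tau'(0)\eps + \frac{1}{2}\tau''(0)\eps^2$, where $\tau'(0)$ in \eqref{ratio} depends only on $F$, $P$ and $B$, all of which are fixed under the hypotheses; hence maximizing $\wh{\tau}(\eps)$ for fixed $\eps>0$ is equivalent to maximizing $\tau''(0)$, which by \eqref{tau2} amounts to minimizing the scalar $\bra \Sigma, ABB^\rT + BB^\rT A^\rT + 2APA^\rT\ket$ over the admissible $A$. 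Under the stated assumptions, $M$ and hence $B$ are fixed, while $A = 2\Theta(R + M^\rT J M)$ depends on $R$ affinely, and by \eqref{Rjj}, \eqref{Rjk} the network energy matrix $R$ is an affine function of the symmetric collection $(R_{jk}^0)$: the diagonal blocks $R_{jj}=R_j$ are fixed, and each off-diagonal block $R_{jk}$ depends on exactly one $R_{jk}^0$ (through $\chi_{N_j^0}(k)R_{jk}^0$) plus fixed terms. So $A$ is affine in the $R_{jk}^0$, and the objective is a quadratic function of them whose pure-quadratic part is $2\bra \Sigma, APA^\rT\ket = 2\,\|F A\sqrt{P}\|^2 \ge 0$ — that is, convex. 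Convexity plus coercivity in the relevant directions (or, more carefully, the fact that a convex quadratic attains its infimum exactly at its stationary points) means the first-order condition is both necessary and sufficient, which is where the "if and only if" comes from.

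Next I would carry out the differentiation. Write the objective as $\Phi(R) := \bra \Sigma, ABB^\rT + BB^\rT A^\rT + 2APA^\rT\ket$ with $A = 2\Theta(R+M^\rT JM)$. Expanding and using $\bra \Sigma, \cdot\ket$ linearity, $\Phi$ is the sum of a term linear in $A$ (hence in $R$) and the quadratic term $8\bra \Sigma, \Theta R P R^\rT \Theta\ket$ plus cross terms linear in $R$ coming from $M^\rT JM$. Taking the Fréchet derivative of $\Phi$ in $R$ along an admissible symmetric perturbation $\delta R$ supported on the edge blocks, and using that $\Theta^\rT = -\Theta$, $\Sigma = \Sigma^\rT$, $P=P^\rT$, I expect the gradient (in the matrix inner product, restricted to the edge-block subspace) to take the form $2\bS(\Theta\Sigma\Theta R P) + (\text{linear-in-}M\text{ constant})$, which is precisely the map $K_{jk}$ in \eqref{KLjk}, \eqref{Ljk} once the constant term $L_{jk}$ absorbs the contributions of $B$ and $JMP$. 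The stationarity condition is then that the $(j,k)$-block of this gradient vanishes for every $j\in V$, $k\in N_j^0$. Finally, I would split off the part of that block that involves the single free variable $R_{jk}^0$ itself: substituting $R = \breve R_{jk}$ plus the rank-structured correction carrying $R_{jk}^0$ into block $(j,k)$ and $R_{jk}^{0\rT}$ into block $(k,j)$, and tracking how $\bS(\Theta\Sigma\Theta R P)_{jk}$ depends on that correction, produces exactly the linear operator $g_{jk}$ of \eqref{gjk} acting on $R_{jk}^0$, while everything else is $K_{jk}(\breve R_{jk})$, which by construction no longer contains $R_{jk}^0$. This yields \eqref{Rjksol}. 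The symmetry identity $g_{jk}(N) = (g_{kj}(N^\rT))^\rT$ should be checked to confirm that the equations for the $(j,k)$- and $(k,j)$-blocks are transposes of one another, so the system is consistent with the constraint $R_{jk}^0 = R_{kj}^{0\rT}$.

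The main obstacle I anticipate is the bookkeeping in the last step: isolating the dependence of the block $\bigl(\bS(\Theta\Sigma\Theta R P)\bigr)_{jk}$ on the single free block $R_{jk}^0$. The matrices $\Theta$, $\Sigma$, $P$ are not block-diagonal in general, so the product $\Theta\Sigma\Theta R P$ mixes all blocks, and one must carefully extract the four terms — $\Theta_j\Sigma_{jj}\Theta_j R_{jk}^0 P_{kk}$, $P_{jj}R_{jk}^0\Theta_k\Sigma_{kk}\Theta_k$, and the two terms involving $R_{jk}^{0\rT}=R_{kj}^0$ with the off-diagonal blocks $\Sigma_{jk}$, $P_{jk}$ — that constitute $g_{jk}(R_{jk}^0)$, while verifying that the symmetrizer $\bS(\cdot)$ does not introduce additional $R_{jk}^0$-dependent pieces beyond these. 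A secondary point requiring care is justifying that necessity and sufficiency hold despite the objective possibly being only positive \emph{semi}definite in the edge-block directions (e.g. if $F$ has a nontrivial kernel meeting the relevant subspace): in that degenerate case the stationary equation \eqref{Rjksol} may have a nonunique solution or, a priori, none, so one should note that $\Phi$ restricted to the affine feasible set is a convex quadratic bounded below (its pure-quadratic part being a norm-square), hence its infimum is attained and the set of minimizers coincides exactly with the set of solutions of the linear system \eqref{Rjksol}.
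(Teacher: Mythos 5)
Your proposal is correct and follows essentially the same route as the paper: reduce the maximization of $\wh{\tau}(\eps)$ to the convex quadratic minimization of $\ddot{\Delta}(0)$ over the edge blocks $R_{jk}^0$ (using that $\tau'(0)$ is independent of $R$ and that $A$ is affine in $R$, which is affine in the $R_{jk}^0$), invoke the first-order stationarity condition as necessary and sufficient by convexity, and split the $(j,k)$-block of the gradient $-4\bS(\Theta\Sigma(BB^\rT+2AP))$ into the $R_{jk}^0$-dependent part $g_{jk}(R_{jk}^0)$ and the remainder $K_{jk}(\breve{R}_{jk})$. The bookkeeping you flag as the main obstacle is exactly what the paper's proof carries out, and your observation that the symmetry $g_{jk}(N)=(g_{kj}(N^\rT))^\rT$ makes the $(j,k)$- and $(k,j)$-equations consistent mirrors the paper's use of the symmetry of $\d_R\ddot{\Delta}(0)$ in the chain-rule step.
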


\begin{proof}
Since $\tau'(0)$ in (\ref{ratio}) does not depend on the network energy matrix $R$, the maximization (\ref{taumaxRjk})  of $\wh{\tau}(\eps)$ from (\ref{tau12}) reduces to that of $\tau''(0)$:
\begin{equation}
\label{supsup}
    \sup_{R_{jk}^0} \wh{\tau}(\eps)
    =
    \tau'(0) \eps
    +
    \frac{1}{2}
    \eps^2
    \sup_{R_{jk}^0}  \tau''(0).
\end{equation}
In turn, (\ref{tau2}) allows the rightmost supremum in (\ref{supsup}) to be represented as
\begin{equation}
\label{supinf}
    \sup_{R_{jk}^0}  \tau''(0)
    =
    -
    \frac{    \|F\sqrt{P}\|^4}
    {\|FB\|^6}
    \inf_{R_{jk}^0}  \ddot{\Delta}(0).
\end{equation}
By (\ref{supsup}) and (\ref{supinf}), the optimization problem (\ref{taumaxRjk}) is equivalent to
\begin{equation}
\label{Delmin}
  \ddot{\Delta}(0)
  \longrightarrow \inf,
    \qquad
    R_{jk}^0=R_{kj}^{0\rT} \in \mR^{n_j\x n_k},
    \quad
    j \in V,
    \quad
    k \in N_j^0,
\end{equation}
which is a convex quadratic  minimization problem in view of (\ref{Delddot}) and the affine
dependence of $A$ in (\ref{AB}) on $R$, with $R$ being an affine function of the direct energy coupling matrices $R_{jk}^0$. Therefore, the first-order optimality condition
\begin{equation}
\label{opt}
    \d_{R_{jk}^0} \ddot{\Delta}(0) = 0,
    \qquad
    \
    j \in V,\
    k \in N_j^0,
\end{equation}
for the partial Frechet derivatives \cite{RS_1980} of $\ddot{\Delta}(0)$ with respect to the matrices $R_{jk}^0$  is both necessary and sufficient for them to deliver a global  minimum to  $\ddot{\Delta}(0)$. As obtained in the proof of \cite[Theorem~1]{VP_2024_ANZCC}, the Frechet derivative of $\ddot{\Delta}(0)$ with respect to the whole network energy matrix $R$ is given by
\begin{equation}
\label{dDel/dR}
    \d_R \ddot{\Delta}(0)
    = -4\bS(\Theta \Sigma(BB^\rT + 2AP)).
\end{equation}
The first  variation of   the composite function $R_{jk}^0 \mapsto R \mapsto  \ddot{\Delta}(0)$ can now be computed by applying the chain rule as
\begin{equation}
\label{deltaDelRjk}
    \delta \ddot{\Delta}(0)
    =
    \bra
        \d_R \ddot{\Delta}(0),
        \delta  R
    \ket
    =
    \bra
        (\d_R \ddot{\Delta}(0))_{jk},
        \delta R_{jk}^0
    \ket
    +
    \bra
        (\d_R \ddot{\Delta}(0))_{kj},
        \delta R_{jk}^{0\rT}
    \ket
    =
    2
    \bra
        (\d_R \ddot{\Delta}(0))_{jk},
        \delta R_{jk}^0
    \ket.
\end{equation}
Here, use is also made of the symmetry of the matrix $\d_R \ddot{\Delta}(0)$ in (\ref{dDel/dR}) along with the relation $\bra \alpha^\rT, \beta^\rT \ket =  \bra \alpha, \beta\ket $ for identically dimensioned matrices. A combination of
(\ref{deltaDelRjk}) with (\ref{dDel/dR}) yields the partial Frechet derivatives in (\ref{opt}):
\begin{equation}
\label{dDel/dRjk}
    \d_{R_{jk}^0}
    \ddot{\Delta}(0)
     =
    2(\d_R \ddot{\Delta}(0))_{jk}
    =
    -8(\bS(\Theta \Sigma(BB^\rT + 2AP)))_{jk}.
\end{equation}
The right-hand side of (\ref{dDel/dRjk}) is an affine function of $R$,  which can be split into $R_{jk}^0$-independent and $R_{jk}^0$-dependent parts as
\begin{equation}
\label{dDel/dRjk1}
    -\frac{1}{16}
    \d_{R_{jk}^0}
    \ddot{\Delta}(0)
    =
    \frac{1}{2}(\bS(\Theta \Sigma(BB^\rT + 2AP)))_{jk}
    =
    \frac{1}{2}(\bS(\Theta \Sigma(BB^\rT + 2\breve{A}_{jk}P)))_{jk}
    +
    g_{jk}(R_{jk}^0)
    =
    K_{jk}(\breve{R}_{jk})     + g_{jk}(R_{jk}^0).
\end{equation}
Here, the linear map $g_{jk}$ from (\ref{gjk})  is evaluated at the matrix $R_{jk}^0$, and
\begin{equation}
\label{Kjk}
    K_{jk}(\breve{R}_{jk})
    :=
    \frac{1}{2}
    (\bS
    (\Theta \Sigma (BB^\rT + 2\breve{A}_{jk}P))_{jk},
\end{equation}
with
\begin{equation}
\label{Ajkhat}
    \breve{A}_{jk}
    :=
    2\Theta
    (
    \breve{R}_{jk}
    + M^\rT JM
    )
    =
    2\Theta \breve{R}_{jk} + BJM
\end{equation}
in accordance with (\ref{AB}) and with the matrix $\breve{R}_{jk}$ being obtained by removing $R_{jk}^0$ (and also $R_{kj}^0 = R_{jk}^{0\rT}$) from $R$ in (\ref{Rjk}).  Note that the relations (\ref{Kjk}), (\ref{Ajkhat}) are in agreement with (\ref{KLjk}), (\ref{Ljk}).
By (\ref{dDel/dRjk1}), the optimality condition (\ref{opt}) takes the form (\ref{Rjksol}), thus establishing the latter as a necessary and sufficient condition of optimality for the problem (\ref{Delmin}) and its equivalent (\ref{taumaxRjk}).
\end{proof}

The linear operator $g_{jk}$,  defined by (\ref{gjk}) on the Hilbert space $\mR^{n_j \x n_k}$ (with the Frobenius inner product $\bra\cdot, \cdot \ket$),  is self-adjoint and negative semi-definite. Indeed, in view of (\ref{dDel/dRjk1}),
\begin{equation}
\label{gjkneg}
    g_{jk} =
    -\frac{1}{16}
    \d_{R_{jk}^0}^2
    \ddot{\Delta}(0)
    =
    g_{jk}^\dagger
    \preccurlyeq
    0
\end{equation}
since the second Frechet derivative $\d_{R_{jk}^0}^2
    \ddot{\Delta}(0)$ of the convex quadratic function $R\mapsto  \ddot{\Delta}(0)$ in (\ref{Delddot}) is a positive semi-definite self-adjoint operator. Therefore, if the operators $g_{jk}$ are all negative definite (and hence, have well-defined inverses $g_{jk}^{-1}$), then the optimality condition (\ref{Rjksol}) can be represented as
\begin{equation}
\label{Rjksol1}
    R_{jk}^0 = - g_{jk}^{-1}(K_{jk}(\breve{R}_{jk})),
    \qquad
     j\in V, \
     k \in N_j^0.
\end{equation}
By using the vectorization $\vec{(\cdot)}$ (or $\col(\cdot)$)  of matrices \cite{M_1988},  the action of the linear operator $g_{jk}$ in (\ref{gjk}) can be expressed as
\begin{equation}
\label{gQ}
    \col(g_{jk}(N))
     =
     Q_{jk} \vec{N}
\end{equation}
in terms of an auxiliary matrix $0\succcurlyeq Q_{jk} \in \mS_{n_jn_k}$ given by
\begin{equation}
\label{Qjk}
  Q_{jk}
  :=
      P_{kk} \ox (\Theta_j \Sigma_{jj}\Theta_j)     +
    (\Theta_k \Sigma_{kk}\Theta_k)\ox P_{jj}
    +   (P_{kj}\ox (\Theta_j \Sigma_{jk}\Theta_k)
    +
    (\Theta_k \Sigma_{kj}\Theta_j) \ox P_{jk})\bT_{jk}.
\end{equation}
Here,
$\bT_{jk}$ is a permutation matrix of order $n_j n_k$ which represents the matrix transpose in $\mR^{n_j \x n_k}$ in vectorized  form: $\col(N^\rT) = \bT_{jk} \vec{N}$ for any $N \in \mR^{n_j\x n_k}$. Hence, in view of (\ref{gjkneg}), the negative definiteness of $g_{jk}$ is equivalent to $\det Q_{jk}\ne 0$. In this case, (\ref{gQ}), (\ref{Qjk}) allow (\ref{Rjksol1}) to be vectorized as
\begin{equation}
\label{Rjksol2}
    \vec{R}_{jk}^0 = - Q_{jk}^{-1} \col(K_{jk}(\breve{R}_{jk})),
    \qquad
     j\in V, \
     k \in N_j^0.
\end{equation}
Note that (\ref{Rjksol}) describes a set of linear algebraic equations (with a resemblance to the Sylvester equations \cite{GLAM_1992,SIG_1998}) for blocks $R_{jk}^0$  of a sparse symmetric matrix whose sparsity structure is specified by the edges of the direct energy coupling graph of the quantum network.
 In the case of $g_{jk}\prec 0$,  the alternative form  (\ref{Rjksol1}) of this set of equations  (including its vectorized representation (\ref{Rjksol2})) allows it to be approached as a fixed-point problem whose practical solution can take into account the specific ``spatially distributed'' structure. Also note that  the network-field and field-mediated coupling matrices $M_j$, $N_{jk}$  enter the set of equations (\ref{Rjksol})  through the matrices (\ref{Ljk}) (which depend on $M$ from (\ref{Mjk}) in a quadratic fashion in view of the second equality in (\ref{AB})) and also through the matrices $\breve{R}_{jk}$ in view of (\ref{Rjk}).

\section{PARTIAL SUBNETWORK ISOLATION}
\label{sec:dev}

As obtained in \cite{VP_2024_arxiv}, under a certain rank condition,   the matrix $F$ in (\ref{phiX})  can be found so as to select  those linear combinations (or a subset) of the network variables  which pertain to a partially isolated subnetwork whose dynamics are affected by the external fields only indirectly. This is specified by the following lemma which is an appropriate adaptation  (with an almost verbatim proof)   of \cite[Lemma~1]{VP_2024_arxiv}.

\begin{lem}
\label{lem:iso}
Suppose the network-field coupling matrix $M$ in (\ref{Mjk}) satisfies
\begin{equation*}
\label{Mrank}
    d:= n-\rank M>0.
\end{equation*}
Then for any $s\< d$,
there exists a full row rank matrix $F \in \mR^{s\x n}$ such that
\begin{equation}
\label{FB0}
    FB =  0
\end{equation}
for the matrix $B$ from (\ref{AB}), and the
process $\varphi$ in (\ref{phiX}) satisfies
\begin{equation}
\label{phidot}
    \mathop{\varphi}^{\centerdot}(t) = G X(t).
\end{equation}
Here, the matrix $G \in \mR^{s\x n}$ is given by
\begin{equation}
\label{G}
    G := 2F\Theta R,
\end{equation}
where $\Theta$ is the CCR matrix for the network variables from (\ref{XXcomm}), and $R$ is the network energy matrix computed in (\ref{Rjj}), (\ref{Rjk}). \hfill$\blacksquare$
\end{lem}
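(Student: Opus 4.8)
The plan is to use the nonsingularity of the CCR matrix $\Theta$ to ``transplant'' a basis of $\ker M$ into the desired $F$. Since $B = 2\Theta M^\rT$ by (\ref{AB}) and $\Theta$ is invertible by (\ref{XXcomm}), the requirement $FB = 0$ is equivalent to $F\Theta M^\rT = 0$, that is, to the rows of $F\Theta$ (read as column vectors) lying in $\ker M$. Because $\dim \ker M = n - \rank M = d \geqslant s$, I would first pick $s$ linearly independent vectors in $\ker M$ and stack them as the rows of a matrix $\Phi \in \mR^{s\x n}$, so that $\rank \Phi = s$ and $M\Phi^\rT = 0$. Then I set $F := \Phi \Theta^{-1}$: invertibility of $\Theta$ makes $\rank F = \rank \Phi = s$ (full row rank), while $F\Theta M^\rT = \Phi M^\rT = (M\Phi^\rT)^\rT = 0$, whence $FB = 2F\Theta M^\rT = 0$, which is (\ref{FB0}).

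Next I would apply the linear map $X \mapsto FX$ to the network QSDE (\ref{dX}). By linearity, $\rd \varphi = FAX\,\rd t + FB\,\rd W$, and since $FB = 0$ the quantum Wiener increment term drops out \emph{identically} (not merely in mean), so $\varphi$ has absolutely continuous operator-valued trajectories with $\mathop{\varphi}^{\centerdot}(t) = FAX(t)$; there is no second-order Ito correction since $\rd\varphi\,\rd\varphi^\rT = O(\rd t^2)$.

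It then remains to identify $FA$ with $G$. Using $A = 2\Theta(R + M^\rT J M)$ from (\ref{AB}) together with the already established $F\Theta M^\rT = 0$, I get $FA = 2F\Theta R + 2(F\Theta M^\rT)JM = 2F\Theta R$, which is exactly $G$ in (\ref{G}). Hence $\mathop{\varphi}^{\centerdot}(t) = GX(t)$, establishing (\ref{phidot}), and the construction inherits the almost-verbatim argument of \cite[Lemma~1]{VP_2024_arxiv}.

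The argument is essentially linear algebra plus the linearity of the QSDE, so there is no serious analytic obstacle; the only point needing (mild) care is the quantum-stochastic interpretation, namely that $FB = 0$ annihilates the quantum Wiener term \emph{exactly}, so that the dynamics of $\varphi$ close in the variables $X$ alone. This is precisely the role of the rank hypothesis $d > 0$: it guarantees $\ker M$ is nontrivial and of dimension at least the prescribed $s$, which is what makes a partially isolated subnetwork possible.
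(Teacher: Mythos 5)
Your proposal is correct and follows essentially the same route as the paper's (cited) proof: the invertibility of $\Theta$ reduces $FB=0$ to placing the rows of $F\Theta$ in $\ker M$, which has dimension $d\geqslant s$, and the same identity $F\Theta M^\rT=0$ then simultaneously annihilates the quantum Wiener term and the $M^\rT JM$ part of $A$, leaving $FA=2F\Theta R=G$. No gaps.
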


In view of the partitioning (\ref{FFF})  of the matrix $F$ and  the parameterization of the matrix $B$ in (\ref{AB}),  the condition (\ref{FB0}) is equivalent to
\begin{equation}
\label{FB01}
    \sum_{j \in V}
    F_j \Theta_j M_{kj}^\rT
    =
    F_k \Theta_k M_k^\rT
    +
    \sum_{j \in N_k^+}
    F_j \Theta_j N_{jk}^\rT D_{kj} = 0,
    \qquad
    k \in V,
\end{equation}
where the block-diagonal structure of the CCR matrix $\Theta$ in (\ref{XXcomm}) is used together  with the blocks (\ref{Mjk}) of the network-field coupling matrix $M$. The representation (\ref{FB01}) of (\ref{FB0}) shows that an appropriate ``tuning'' of the field-mediated coupling between the constituent OQHOs in the network can provide an additional freedom in choosing the matrix $F$ subject to (\ref{FB0}).

The absence of the quantum diffusion term (involving $W$) on the right-hand side of (\ref{phidot}) makes it an ODE and allows $\varphi$ to be interpreted as a vector of subnetwork variables,  partially isolated from the external fields. Compared to the completely autonomous subsystems arising in the quantum
Kalman decomposition \cite{ZGPG_2018}, this isolation is partial because $\varphi$ is indirectly affected by $W$ through ``complementary'' network variables which form a quantum process
\begin{equation}
\label{psiX}
    \psi(t) := T X(t).
\end{equation}
Here, $T \in \mR^{(n-s)\x n}$  is any matrix whose rows complement those of the matrix $F$ in Lemma~\ref{lem:iso} to a nonsingular matrix:
\begin{equation*}
\label{SFT}
    S :=
    \begin{bmatrix}
      F\\
      T
    \end{bmatrix}
    \in \mR^{n\x n},
    \qquad
        \det S\ne 0.
\end{equation*}
Therefore, an appropriate  similarity transformation allows the network QSDE (\ref{dX}) to be represented as an ODE for $\varphi$ and a QSDE for $\psi$ as
\begin{equation}
\label{phipsi}
    \mathop{\varphi}^{\centerdot} = a_{11}\varphi + a_{12}\psi,
    \qquad
    \rd \psi
    =
    (a_{21}\varphi + a_{22}\psi) \rd t + b \rd W,
\end{equation}
where the matrices $a_{11} \in \mR^{s\x s}$, $a_{12} \in \mR^{s\x (n-s)}$, $a_{21} \in \mR^{(n-s)\x s}$, $a_{22} \in \mR^{(n-s)\x (n-s)}$ and $b\in \mR^{(n-s)\x m}$ are given by
\begin{equation*}
\label{ab}
    a:=
    {\begin{bmatrix}
      a_{11} & a_{12}\\
      a_{21} & a_{22}
    \end{bmatrix}}
    =
            SAS^{-1},
    \qquad
    b := TB.
\end{equation*}
The decomposition (\ref{phipsi}) of the network dynamics can be viewed as an  interconnection of a partially isolated subnetwork $\Phi$ with the vector $\varphi$ of internal variables in (\ref{phiX}) interacting with a complementary subnetwork $\Psi$ with the vector $\psi$ of internal variables in (\ref{psiX}), with only $\Psi$ being directly affected by the external quantum Wiener process $W$ in (\ref{W}). In this sense,   $\Psi$ ``shields'' $\Phi$ from $W$;   see Fig.~\ref{fig:PhiPsi}.
\begin{figure}[htbp]
{\centering
\unitlength=0.75mm
\linethickness{0.4pt}
\begin{picture}(50.00,55)
    \put(10,10){\framebox(30,30)[cc]{}}
    \put(0,0){\framebox(50,50)[cc]{}}
    \put(15,28){\makebox(0,0)[cc]{$\varphi$}}
    \put(35,28){\makebox(0,0)[cc]{$\psi$}}
    \put(20,20){\framebox(10,10)[cc]{$\Phi$}}
    \put(45,25){\makebox(0,0)[cc]{$\Psi$}}
    \put(40,25){\vector(-1,0){10}}
    \put(20,25){\vector(-1,0){10}}
    \put(60,25){\vector(-1,0){10}}
    \put(63,25){\makebox(0,0)[lc]{$W$}}
\end{picture}
\caption{A schematic representation of (\ref{phipsi}) as an interconnection of quantum subnetworks $\Phi$, $\Psi$, where $\Phi$ is affected by the external fields $W$ only through $\Psi$ which interacts with $W$.}
\label{fig:PhiPsi}}
\end{figure}
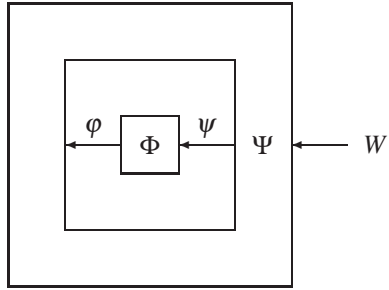

The special choice of the matrix $F$ in Lemma~\ref{lem:iso} subject to (\ref{FB0}) makes  the coefficient  $\dot{\Delta}(0)$ in (\ref{Deldot}) vanish. Hence, the quadratic term becomes dominant in (\ref{Delasy0}) which acquires the form \cite[Lemma~2]{VP_2024_arxiv}
\begin{equation}
\label{Del0asy}
  \Delta(t)
  =
    \frac{1}{2}\ddot{\Delta}(0)t^2
   + O(t^3),
   \qquad
   \ddot{\Delta}(0)
   =
   2\|G \sqrt{P}\|^2,
\end{equation}
as $t \to 0+$, with the matrices $G$ from (\ref{G}) and $P$ from (\ref{EXX0}). The fact that the short-horizon asymptotic behaviour (\ref{Del0asy}) is  quadratic (rather than linear) in time is a consequence of the partial isolation of the subnetwork $\Phi$ from the external fields (see Fig.~\ref{fig:PhiPsi}).   This is a qualitatively slower growth of the function $\Delta$ compared to the case of matrices $F$ satisfying (\ref{pos1}).
Accordingly, the partial subnetwork isolation prolongs the memory decoherence time in the high-fidelity limit as the following adaptation of \cite[Theorem 1]{VP_2024_arxiv} shows.

\begin{lem}
\label{lem:tauasy}
Suppose the conditions of Lemma~\ref{lem:iso} are satisfied together with (\ref{Fpos}) and
\begin{equation*}
\label{Gpos}
    G
    \sqrt{P}
    \ne 0.
\end{equation*}
Then the memory decoherence time (\ref{tau}) behaves asymptotically as
\begin{equation}
\label{tau0asy}
    \tau(\eps)
    \sim
    \frac{\|F\sqrt{P}\|}{\|G\sqrt{P}\|}
    \sqrt{\eps}
    =:
    \wh{\tau}(\eps),
    \qquad
    {\rm as}\
    \eps\to 0+.
\end{equation}
\hfill$\blacksquare$
\end{lem}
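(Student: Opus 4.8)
The plan is to derive the asymptotic formula \eqref{tau0asy} directly from the quadratic short-horizon expansion \eqref{Del0asy} of the mean-square deviation functional, exactly as in the definition \eqref{tau} of the decoherence time. First I would recall that, under the hypotheses, Lemma~\ref{lem:iso} applies, so the matrix $F$ satisfies \eqref{FB0}, and therefore $\dot{\Delta}(0) = \|FB\|^2 = 0$ by \eqref{Deldot}. Combined with $\Delta(0)=0$, this forces the leading term of $\Delta(t)$ as $t\to 0+$ to be quadratic, and \eqref{Del0asy} gives $\Delta(t) = \|G\sqrt{P}\|^2 t^2 + O(t^3)$, where the coefficient is strictly positive by the assumption $G\sqrt{P}\neq 0$. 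In particular $\Delta$ is strictly increasing on a right neighbourhood of $0$, so for small $\eps>0$ the infimum in \eqref{tau} is attained at the unique small root of the equation $\Delta(t) = \eps\Delta_*$, with $\Delta_* = \|F\sqrt{P}\|^2$ from \eqref{Ephiphi} being strictly positive by \eqref{Fpos}.

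The second step is to invert the relation $\Delta(\tau(\eps)) = \eps\Delta_*$ asymptotically. Writing $\Delta(t) = \|G\sqrt{P}\|^2 t^2(1 + O(t))$ and substituting $t = \tau(\eps)$, one gets $\|G\sqrt{P}\|^2 \tau(\eps)^2 (1 + o(1)) = \eps \|F\sqrt{P}\|^2$ as $\eps\to 0+$, since $\tau(\eps)\to 0$ as $\eps\to 0+$ (which itself follows because $\Delta$ is continuous with $\Delta(0)=0$, so the threshold $\eps\Delta_*\to 0$ is reached arbitrarily early). Solving for $\tau(\eps)$ and taking the square root yields
\begin{equation*}
    \tau(\eps)
    =
    \frac{\|F\sqrt{P}\|}{\|G\sqrt{P}\|}\sqrt{\eps}\,(1+o(1)),
    \qquad
    {\rm as}\
    \eps \to 0+,
\end{equation*}
which is precisely \eqref{tau0asy}. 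To make the inversion rigorous one can use the infinite differentiability of $\Delta$ together with the implicit function theorem applied to $u(\eps,t) := \Delta(t) - \eps\Delta_*$ after the substitution $t = \sqrt{\eps}\,v$, so that $u(\eps, \sqrt{\eps}v) = \eps(\|G\sqrt{P}\|^2 v^2 - \Delta_*) + O(\eps^{3/2})$; dividing by $\eps$ gives a smooth function of $(\sqrt{\eps}, v)$ with a simple zero at $v = \|F\sqrt{P}\|/\|G\sqrt{P}\|$ when $\eps = 0$, yielding a smooth branch $v(\eps)$ with $v(0) = \|F\sqrt{P}\|/\|G\sqrt{P}\|$ and hence $\tau(\eps) = \sqrt{\eps}\,v(\eps) \sim \sqrt{\eps}\, v(0)$.

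The main obstacle — though it is minor here — is justifying that the infimum in \eqref{tau} really coincides with the small root of $\Delta(t) = \eps\Delta_*$ rather than being achieved later through some non-monotone behaviour of $\Delta$ on a larger interval; this is handled by noting that strict positivity of the quadratic coefficient $\|G\sqrt{P}\|^2$ guarantees $\Delta$ is strictly increasing near $0$, so for $\eps$ small enough the first crossing of the threshold occurs in that neighbourhood. The remaining steps — reducing to \eqref{Del0asy}, which is itself quoted from \cite[Lemma~2]{VP_2024_arxiv}, and the elementary asymptotic inversion of a quadratic-leading expansion — are routine. I would also remark, as the statement already records via the ``$=: \wh{\tau}(\eps)$'' notation, that this $\wh{\tau}(\eps)$ decays only like $\sqrt{\eps}$, in contrast to the linear-in-$\eps$ leading behaviour $\tau'(0)\eps$ of \eqref{tau12} in the non-isolated case \eqref{pos1}, so the partially isolated subnetwork indeed has an asymptotically longer decoherence time in the high-fidelity limit $\eps\to 0+$.
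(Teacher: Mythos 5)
Your proof is correct and follows essentially the same route as the paper, which omits the argument by citing \cite[Theorem~1]{VP_2024_arxiv} but clearly intends exactly this: the partial isolation \eqref{FB0} kills $\dot{\Delta}(0)$, the quadratic expansion \eqref{Del0asy} with strictly positive coefficient $\|G\sqrt{P}\|^2$ then dominates, and asymptotic inversion of $\Delta(\tau(\eps))=\eps\Delta_*$ with $\Delta_*=\|F\sqrt{P}\|^2>0$ gives \eqref{tau0asy}. Your added care about the infimum in \eqref{tau} being the first crossing (via strict monotonicity of $\Delta$ near $0$) and the implicit-function-theorem justification in the variable $\sqrt{\eps}$ are sound refinements rather than deviations.
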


As a result of the partial subnetwork isolation,   the asymptotic behaviour (\ref{tau0asy}) in the high-fidelity limit  yields a longer decoherence time compared to the case of (\ref{pos1}), where $\tau(\eps)$ is asymptotically linear with respect to $\eps$ in view of (\ref{ratio}). Indeed, $\eps = o(\sqrt{\eps})$, as $\eps\to 0+$.  In the framework of (\ref{tau0asy}),  the maximization of $\tau(\eps)$ at a given small value of $\eps$ can also be replaced with its approximate  version (\ref{tauhatmax}). Assuming the matrices $F$  and $P$ to be fixed, the problem   (\ref{tauhatmax}),  with the approximate decoherence time  $\wh{\tau}(\eps)$ from (\ref{tau0asy}),  is equivalent to
the minimization of the denominator
\begin{equation}
\label{GP}
    \|G \sqrt{P}\|
    =
    2\|F\Theta R \sqrt{P}\|
\end{equation}
(see (\ref{G})),
which is a convex function of the network energy matrix $R$. The following theorem is a counterpart of Theorem~\ref{th:tauhatmax1} in the partial isolation setting. Accordingly, (\ref{pos1}) is replaced with (\ref{FB0}), and (\ref{tau12}) is replaced with (\ref{tau0asy}).

\begin{thm}
\label{th:Rjk}
Suppose the network-field and field-mediated coupling matrices $M_j$, $N_{jk}$ are fixed along with the matrices $D_{jk}$ and the individual energy matrices $R_j$ of the component OQHOs in the network. Also,
suppose
the fidelity level $\eps$ in (\ref{tau}) and the matrix $F$ in (\ref{phiX}), (\ref{FF})
are fixed together with the matrix
$P$  in (\ref{EXX0}), and the conditions (\ref{Fpos}), (\ref{FB0}) are satisfied. Then the direct energy coupling matrices $R_{jk}^0$  deliver a solution to the
problem (\ref{taumaxRjk})
of maximizing the approximate decoherence time $\wh{\tau}(\eps)$ from (\ref{tau0asy}) if
and only if they satisfy (\ref{Rjksol}) with
the maps $g_{jk}$ from (\ref{gjk}),  and the maps $K_{jk}$ from (\ref{KLjk}) being replaced with
\begin{equation}
\label{Kjk1}
    K_{jk}(R)
    :=
    2
    (\bS
    (\Theta \Sigma \Theta RP))_{jk}.
\end{equation}
\hfill$\square$
\end{thm}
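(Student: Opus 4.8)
The plan is to follow the proof of Theorem~\ref{th:tauhatmax1} almost verbatim, exploiting the algebraic simplifications that the constraint (\ref{FB0}) provides. First I would observe that, with $F$ and $P$ held fixed, the numerator $\|F\sqrt{P}\|$ of the approximate decoherence time $\wh{\tau}(\eps)$ in (\ref{tau0asy}) is a constant, so the maximization problem (\ref{taumaxRjk}) is equivalent to minimizing the denominator $\|G\sqrt{P}\| = 2\|F\Theta R\sqrt{P}\|$ from (\ref{GP}) or, what is the same in view of (\ref{Del0asy}), the quantity $\ddot{\Delta}(0) = 2\|G\sqrt{P}\|^2$, over the direct energy coupling matrices $R_{jk}^0$. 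Since the network energy matrix $R$ is an affine function of these matrices (because $M_j$, $N_{jk}$, $D_{jk}$ and $R_j$ are fixed, so that $M$ in (\ref{Mjk}) and hence the constant term in (\ref{Rjk}) are fixed), and $\|F\Theta R\sqrt{P}\|^2$ is a sum of squares of linear forms in $R$, the composite map $(R_{jk}^0)\mapsto \ddot{\Delta}(0)$ is convex quadratic. Consequently the first-order stationarity condition $\d_{R_{jk}^0}\ddot{\Delta}(0) = 0$ for all $j\in V$, $k\in N_j^0$ is both necessary and sufficient for a global minimum, and it remains only to recast it in the claimed form (\ref{Rjksol}) with the modified maps $K_{jk}$ in (\ref{Kjk1}).

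Next I would compute the Frechet derivative of $\ddot{\Delta}(0)$. The cleanest route is to start from the general formula (\ref{dDel/dR}), $\d_R\ddot{\Delta}(0) = -4\bS(\Theta\Sigma(BB^\rT + 2AP))$, and observe that (\ref{FB0}) yields $\Sigma B = F^\rT(FB) = 0$, which annihilates the $BB^\rT$ contribution and leaves $\d_R\ddot{\Delta}(0) = -8\bS(\Theta\Sigma AP)$; the same conclusion follows directly from $\ddot{\Delta}(0) = 2\bra\Sigma, APA^\rT\ket$ (itself a consequence of (\ref{Delddot}) and $\Sigma B=0$) by the chain rule through $A = 2\Theta(R + M^\rT J M)$. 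Projecting onto the $(j,k)$-block exactly as in (\ref{deltaDelRjk}), using the symmetry of $\d_R\ddot{\Delta}(0)$ and $\d R_{kj}^0 = \d R_{jk}^{0\rT}$, gives $\d_{R_{jk}^0}\ddot{\Delta}(0) = 2(\d_R\ddot{\Delta}(0))_{jk} = -16(\bS(\Theta\Sigma AP))_{jk}$.

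Then I would split $A$ into its $R_{jk}^0$-independent part $\breve{A}_{jk} = 2\Theta\breve{R}_{jk} + BJM$ from (\ref{Ajkhat}) and the part depending linearly on $R_{jk}^0$ (and on $R_{kj}^0 = R_{jk}^{0\rT}$), precisely as in the passage leading to (\ref{dDel/dRjk1}). The contribution of the $R_{jk}^0$-dependent part collapses, after tracking the block indices of $\Theta$, $\Sigma$ and $P$, to the same linear operator $g_{jk}$ as in (\ref{gjk}); this is forced, since $g_{jk} = -\tfrac{1}{16}\d_{R_{jk}^0}^2\ddot{\Delta}(0)$ is the Hessian of the quadratic, which is insensitive to the lower-order terms that (\ref{FB0}) alters. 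The contribution of $\breve{A}_{jk}$ becomes $(\bS(\Theta\Sigma\breve{A}_{jk}P))_{jk}$; substituting $\breve{A}_{jk} = 2\Theta\breve{R}_{jk} + BJM$ and invoking $\Sigma B = 0$ once more to kill the $BJM$ term (which in the unconstrained setting produced the constant $L_{jk}$ of (\ref{Ljk})) leaves exactly $2(\bS(\Theta\Sigma\Theta\breve{R}_{jk}P))_{jk} = K_{jk}(\breve{R}_{jk})$ with $K_{jk}$ as in (\ref{Kjk1}). Hence $-\tfrac{1}{16}\d_{R_{jk}^0}\ddot{\Delta}(0) = g_{jk}(R_{jk}^0) + K_{jk}(\breve{R}_{jk})$, so the stationarity condition is (\ref{Rjksol}) with this $K_{jk}$, establishing the theorem.

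The main obstacle, and really the only delicate point, is the bookkeeping: identifying exactly which of the terms appearing in the proof of Theorem~\ref{th:tauhatmax1} are annihilated by the identity $\Sigma B = 0$ implied by (\ref{FB0}) --- namely the $BB^\rT$ term in $\d_R\ddot{\Delta}(0)$, the $BJM$ part of $\breve{A}_{jk}$, and the constant term $L_{jk}$ of (\ref{Ljk}) that appeared in (\ref{KLjk}) --- while checking that the Hessian operator $g_{jk}$ of (\ref{gjk}) survives unchanged and that the block-projection identity (\ref{deltaDelRjk}) carries over verbatim. One should also record that the convex quadratic $\ddot{\Delta}(0)$ attains its infimum (e.g.\ via the explicit solution (\ref{Rjksol1}), or its vectorized form (\ref{Rjksol2}), available when the operators $g_{jk}$ are negative definite, as discussed after Theorem~\ref{th:tauhatmax1}), although the stated equivalence of (\ref{Rjksol}) with optimality holds regardless.
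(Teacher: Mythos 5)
Your proposal is correct, and it takes precisely the route that the paper mentions in the opening sentence of its proof but then deliberately sets aside: you specialize the machinery of Theorem~\ref{th:tauhatmax1} --- the general Frechet derivative (\ref{dDel/dR}), the block-projection identity (\ref{deltaDelRjk}), and the splitting of $A$ via (\ref{Ajkhat}) --- by invoking $\Sigma B = F^\rT (FB) = 0$ to annihilate the $BB^\rT$ term, the $BJM$ part of $\breve{A}_{jk}$, and hence the constant $L_{jk}$ of (\ref{Ljk}), which collapses (\ref{KLjk}) to (\ref{Kjk1}). The paper instead gives a self-contained argument: it reduces the problem to minimizing $\phi := \frac{1}{2}\|F\Theta R\sqrt{P}\|^2 = \frac{1}{16}\ddot{\Delta}(0)$ and computes $\delta\phi = -2\bra(\bS(\Theta\Sigma\Theta RP))_{jk}, \delta R_{jk}^0\ket$ directly from the variational identity $\delta(\|z\|^2) = 2\bra z,\delta z\ket$, never touching (\ref{dDel/dR}). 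The two derivations yield the same stationarity condition $g_{jk}(R_{jk}^0) + K_{jk}(\breve{R}_{jk}) = 0$; yours buys a transparent accounting of exactly which terms of the general case are killed by the partial-isolation constraint (\ref{FB0}) (and your Hessian argument for why $g_{jk}$ survives unchanged is sound, since the part of $\ddot{\Delta}(0)$ quadratic in $R$ comes solely from $2\bra\Sigma, APA^\rT\ket$ in (\ref{Delddot})), while the paper's direct computation is shorter and does not rely on the correctness of the earlier formula. Your closing remark that the equivalence of (\ref{Rjksol}) with global optimality holds by convexity irrespective of whether the infimum is attained is also consistent with how the paper states and proves the theorem.
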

\begin{proof}
Although (\ref{Kjk1}) can be obtained from (\ref{KLjk}) by noting that, in view of (\ref{FF}),  $\Sigma B = F^\rT FB = 0$ in the partial subnetwork isolation case (\ref{FB0})  and hence, $L_{jk}=0$ in (\ref{Ljk}), we will provide an independent proof of the present theorem. As mentioned above, the problem (\ref{taumaxRjk}),  with the approximate decoherence time  $\wh{\tau}(\eps)$ from (\ref{tau0asy}), reduces to the minimization of the quantity (\ref{GP}). The latter is equivalent to minimizing
\begin{equation}
\label{fRjk}
    \phi
    :=
    \frac{1}{2}\|F\Theta R \sqrt{P}\|^2
    =
    \frac{1}{16}
    \ddot{\Delta}(0)
\end{equation}
(cf. (\ref{G}),  (\ref{Del0asy}))
as a convex quadratic function of the     direct energy coupling matrices $R_{jk}^0=R_{kj}^{0\rT} \in \mR^{n_j\x n_k}$, with
$j \in V$,
    $k \in N_j^0$, and the $\frac{1}{2}$-factor  introduced for convenience.   Therefore, (\ref{fRjk}) achieves a global minimum if and only if the partial Frechet derivatives  vanish:
\begin{equation}
\label{phi'0}
    \d_{R_{jk}^0} \phi = 0,
    \qquad
    \
    j \in V,\
    k \in N_j^0.
\end{equation}
By using the variational identity $\delta (\|z\|^2) = 2\bra z, \delta z\ket $ which holds for a real matrix $z$, the first variation of $\phi$ in  (\ref{fRjk}) with respect to the matrix $R_{jk}^0$ can be computed, similarly to (\ref{deltaDelRjk}), as
\begin{equation}
\label{deltaf}
    \delta \phi
    =
    \bra
    F\Theta R \sqrt{P},
    F\Theta (\delta R) \sqrt{P}
    \ket
    =
    -
    \bra
    \Theta F^\rT F\Theta R P,
    \delta R
    \ket
    =
    -
    \bra
    \bS(\Theta \Sigma \Theta R P),
    \delta R
    \ket
    =
    -2
    \bra
    (\bS(\Theta \Sigma\Theta R P))_{jk},
    \delta R_{jk}^0
    \ket.
\end{equation}
Here, use is also made of  the antisymmetry of the CCR matrix $\Theta$ in (\ref{XXcomm}) and the symmetry of $P$ in (\ref{EXX0}) along with (\ref{FF}). The relation (\ref{deltaf}) leads to the partial Frechet derivative
\begin{equation}
\label{f'}
    -\d_{R_{jk}^0} \phi
    =
    2(\bS(\Theta \Sigma\Theta R P))_{jk}
    =
    g_{jk}(R_{jk}^0) + K_{jk}(\breve{R}_{jk})
\end{equation}
represented in terms of (\ref{gjk}), (\ref{Kjk1}).
By (\ref{phi'0}), (\ref{f'}),   the condition (\ref{Rjksol}), with the maps $K_{jk}$ from (\ref{Kjk1}), is necessary and sufficient for optimality in the problem (\ref{taumaxRjk}) with the approximate decoherence time from (\ref{tau0asy}) in the partial subnetwork isolation case (\ref{FB0}).
\end{proof}

Note that Theorem~\ref{th:Rjk} is essentially an adaptation of Theorem~\ref{th:tauhatmax1} to the partial subnetwork isolation setting, with both of them reducing to the minimization of $\ddot{\Delta}(0)$ from (\ref{Delddot}) which takes a specific form (\ref{Del0asy}) in this case. Accordingly, the previous remarks on the general structure and numerical solution of the equations (\ref{Rjksol}) for an optimal direct energy coupling remain  applicable in this case too. At the same time, in the partial subnetwork isolation case, the maps $K_{jk}$ in (\ref{Kjk1}) do not involve the network-field and field-mediated coupling matrices. Therefore, other than affecting the choice of the matrix $F$ subject to (\ref{FB0}),   these matrices enter the equations (\ref{Rjksol}) only through the matrices $\breve{R}_{jk}$.

\section{CONCLUSION}
\label{sec:conc}

For a network of OQHOs with direct energy and field-mediated couplings, driven by external quantum fields, we have considered the problem of maximizing the high-fidelity approximation of the memory decoherence time (associated with the weighted mean-square deviation of the network variables from their initial conditions)  over the direct energy coupling matrices. Using the convex quadratic structure of this problem, we have obtained a necessary and sufficient condition of  optimality in the form of a set of Sylvester-like linear algebraic equations for blocks of a matrix whose sparsity is specified by the edges of the direct energy coupling graph of the network.  The approximate decoherence time maximization problem and its solution have also been discussed for a partially isolated  subnetwork with a qualitatively longer decoherence time and thus with an  enhanced memory system performance. The development of an algorithm for numerical solution of the resulting sets of equations can benefit from a combination of vectorization techniques with advanced graph theoretic considerations. Other possible directions of research on this topic include the extension of the quantum memory optimization approach  beyond the high-fidelity (that is, short-horizon) approximations and to finite-level open quantum networks with multipartite interactions.

\end{document}